\theoremstyle{plain}
\newtheorem{Theorem}{Theorem}
\newtheorem{Proposition}[Theorem]{Proposition}
\newtheorem{Corollary}{Corollary}
\newtheorem{Definition}{Definition}
\newtheorem{Remark}{Remark}
\newcommand{\R}{\mathbb{R}}
\renewcommand{\d}{\textnormal{ d}}
\newcommand{\hidethis}[1]{}
\newcommand{\E}{\mathbb{E}}
\newcommand{\support}{\textnormal{supp}}
\newcommand{\ren}{\textnormal{Ren}}
\newcommand{\prob}{\mathbb{P}}
\newcommand{\essup}{\textnormal{ess sup}}
\begin{document}
\title{Conditional R\'enyi entropy and the relationships between R\'enyi capacities}
\author{
  Gautam Aishwarya\\
  \texttt{University of Delaware}\\
  \texttt{gautama@udel.edu}
  \and
 Mokshay Madiman\\
 \texttt{University of Delaware}\\
  \texttt{madiman@udel.edu}
}

\date{}
\maketitle

\abstract{The analogues of Arimoto's definition of conditional R\'enyi entropy and R\'enyi mutual 
information are explored for abstract alphabets. 
These quantities, although dependent on the reference measure, have some useful properties 
similar to those known in the discrete setting. In addition to laying out some such basic properties
and the relations to R\'enyi divergences, the relationships between the families of mutual informations 
defined by Sibson, Augustin-Csisz\'ar, and Lapidoth-Pfister, as well as the corresponding capacities, are explored.}

\section{Introduction}

Shannon's information measures (entropy, conditional entropy, the Kullback divergence or relative entropy, and mutual information) 
are ubiquitous both because they arise as operational fundamental limits of various communication or statistical inference problems,
and because they are functionals that have become fundamental in the development and advancement of probability theory itself.
Over a half century ago, R\'enyi \cite{Ren61} introduced a family of information measures extending those of Shannon \cite{Sha48},
parametrized by an order parameter $\alpha\in [0,\infty]$.
R\'enyi's information measures are also fundamental-- indeed, they are (for $\alpha>1$) just  monotone functions of $L^p$-norms,
whose relevance or importance in any field that relies on analysis need not be justified. Furthermore,
they show up in probability theory, PDE, functional analysis, additive combinatorics, and convex geometry 
(see, e.g., \cite{Tos14:2, WM14, MWW19, WWM14:isit, MWW17:1, MMX17:0, FLM20}), in ways where understanding
them as information measures instead of simply as monotone functions of $L^p$-norms is fruitful. 
For example, there is an intricate story of parallels between entropy power inequalities (see, e.g., \cite{Sta59, MB06:isit, MG19}),
Brunn-Minkowski-type volume inequalities (see, e.g., \cite{BMW11, FMMZ18, MK18}) and sumset cardinality (see, e.g., \cite{Ruz09:1, MMT10:itw, Tao10, MMT12, KM14}),
which is clarified by considering logarithms of volumes and Shannon entropies as members of the larger class of R\'enyi entropies.
It is also recognized now that R\'enyi's information measures show up as fundamental operational limits in a range of
information-theoretic or statistical problems (see, e.g., \cite{Csi95, Ari96, BL14, BL14:itw, TH18:1, TH18:2, LP19}).
Therefore, there has been considerable interest in developing the theory surrounding R\'enyi's information measures
(which is far less well developed than the Shannon case), and there has been a steady stream 
of recent papers \cite{TMA12, EH14, FB14, Ver15:ita, LP18:itw, LP19, SV18:1, SV18:2, SM19, Nak19:1} elucidating their properties 
beyond the early work of \cite{Cam65, Sib69, Ari77, Aug78:hab}. This paper, part of which was presented
at ISIT 2019 \cite{AM19:isit}, is a further contribution along these lines.


More specifically, three notions of R\'enyi mutual information have been considered in the literature 
(usually named after Sibson, Arimoto and Csisz\'ar) for discrete alphabets. 
Sibson's definition has also been considered for abstract alphabets, but Arimoto's definition has not.
Indeed Verd\'u  \cite{Ver15:ita} asserts: ``One shortcoming of Arimoto's proposal is that its generalization to non-discrete alphabets is not self-evident.''
The reason it is not self-evident is because although there is an obvious generalized definition, the mutual information arising from this 
notion depends on the choice of reference measure on the abstract alphabet, which is not a desirable property. Nonetheless, the perspective taken in this note is that
it is still interesting to develop the properties of the abstract Arimoto conditional R\'enyi entropy keeping in mind
the dependence on the reference measure. The Sibson definition is then just a special 
case of the Arimoto definition where we choose a particular, special reference measure.  

Our main motivation comes from considering various notions of R\'enyi capacity.
While certain equivalences have been shown between various such notions by Csisz\'ar \cite{Csi95} for finite alphabets and Nakibo\u{g}lu \cite{Nak19:1, Nak19:2} for abstract alphabets,
the equivalences and relationships are further extended in this note. 

This paper is organized in the following manner. In Section \ref{sec:defn} below we begin by defining conditional R\'enyi entropy for random variables taking values in a Polish space. 
Section \ref{sec:div} presents a variational formula for the conditional R\'enyi entropy in terms of R\'enyi divergence, which will be a key ingredient in several results later. 
Basic properties that the abstract conditional R\'enyi entropy satisfies akin to its discrete version are proved in Section \ref{sec:pty}, including description for special orders $0,1$ and $\infty$, 
monotonicity in the order, reduction of entropy upon conditioning, and a version of the chain rule. 
Section \ref{sec:mi} discusses and compares several notions of $\alpha$-mutual information. 
The various notions of channel capacity arising out of different notions of $\alpha$-mutual information 
are studied in Section \ref{sec:cc}, which are then compared using results from the preceding section.

\section{Definition of conditional R\'enyi entropies} \label{sec:defn}

Let $S$ be a Polish space and $\mathcal{B}_{S}$ its Borel $\sigma$-algebra. We fix a $\sigma$-finite reference measure $\gamma$ on $(S, \mathcal{B}_{S})$.
Our study of entropy and in particular all $L^p$ spaces we talk about will be with respect to this measure space, unless stated otherwise.

\begin{Definition}
Let $X$ be an $S$-valued random variable with density $f$ with respect to $\gamma$. We define the R\'enyi entropy of $X$ of order $\alpha \in (0,1) \cup (1, \infty)$ by
\[
h^{\gamma}_{\alpha}(X) = \frac{\alpha}{1- \alpha } \log \Vert f \Vert_{\alpha}.
\]
\end{Definition} 
It will be convenient to write down R\'enyi entropy as
\[
h^{\gamma}_{\alpha}(X) = - \log \ren^{\gamma}_{\alpha}(X),
\]
where $\ren^{\gamma}_{\alpha}(X) = \Vert f \Vert_{\alpha}^{\frac{\alpha}{\alpha - 1}}$ will be called the R\'enyi probability of order $\alpha$ of $X$.

Let $T$ be another Polish space with a fixed measure $\eta$ on its Borel $\sigma$-algebra $\mathcal{B}_{T}$. Now suppose $X,Y$ are, resepectively, $S,T$-valued random variables with a joint density $F: S \times T \to \mathbb{R}$ w.r.t. the reference $\gamma \otimes \eta$. We will denote the marginals of $F$ on $S$ and $T$ by $f$ and $g$ respectively. This in particular means that $X$ has density $f$ w.r.t. $\gamma$ and $Y$ has density $g$ w.r.t. $\eta$. 
Just as R\'enyi probability of $X$, one can define the R\'enyi probability of the conditional $X$ given $y=y$ by the expression $\ren^{\gamma}_{\alpha}(X \vert Y=y) = \Vert \frac{F(\cdot, y)}{g(y)} \Vert^{\frac{\alpha}{\alpha - 1}}_{\alpha}$. The following generalizes \cite[Definition 2]{FB14}. 
\begin{Definition}
Let $\alpha \in (0,1) \cup (1, \infty)$. We define the conditional R\'enyi entropy $h^{\gamma}_{\alpha}(X \vert Y)$ in terms of a weighted mean of conditional R\'enyi probabilities $\ren^{\gamma}_{\alpha}(X \vert Y = y)$,
\[
h^{\gamma}_{\alpha}(X \vert Y) = - \log \ren^{\gamma}_{\alpha} (X \vert Y),
\]
where 
\[
\begin{split}
&\ren^{\gamma}_{\alpha}(X \vert Y) = \left (\int_T \ren^{\gamma}_{\alpha}(X \vert Y = y)^{\frac{\alpha - 1 }{\alpha}} \d \prob_{Y} \right)^{\frac{\alpha}{\alpha - 1}} \\ 
\end{split}
\]
\end{Definition}
We can re-write $\ren^{\gamma}_{\alpha}(X \vert Y)$ as 
\[
\ren^{\gamma}_{\alpha}(X \vert Y) = \left( \int_{\support (\prob_{Y})} g(y) \left( \int_S \left( \frac{F(x,y)}{g(y)} \right)^{\alpha} \d \gamma(x) \right)^{\frac{1}{\alpha}} \d \eta(y) \right)^{\frac{\alpha}{\alpha - 1}},
\]
which is the expected $L^{\alpha}(S, \gamma)$ norm of the conditional density under the measure $\prob_{Y}$ raised to a power which is the H\"older conjugate of $\alpha$. 
Using Fubini's theorem the formula for $\ren^{\gamma}_{\alpha}(X \vert Y)$ can be further written down only in terms of the joint density,
\[
\ren^{\gamma}_{\alpha}(X \vert Y) = \left( \int_T \left( \int_S F(x,y)^{\alpha} \d \gamma(x) \right)^{\frac{1}{\alpha}} \d \eta(y) \right)^{\frac{\alpha}{\alpha - 1}}.
\]

\begin{Remark} \label{rem: defrem}
Suppose $\prob_{X \vert Y=y}$, for each $y \in \support(g)$, denotes the conditional distribution of $X$ given $Y=y$, i.e. the probability measure on $S$ with density $F(x,y)/g(y)$ with respect to $\gamma$, then the conditional R\'enyi entropy can be written as
\[
h^{\gamma}_{\alpha}(X \vert Y) = \frac{\alpha}{1- \alpha} \log \int_T e^{- \frac{1-\alpha}{\alpha} D_{\alpha}(\prob_{X \vert Y=y} \Vert \gamma)} \d \prob_{Y}(y),
\]
where $D_{\alpha}(\cdot \Vert \cdot)$ denotes R\'enyi divergence (see \ref{def: renyidiv}). 
\end{Remark}

When $X$ and $Y$ are independent random variables one can easily check that $\ren^{\gamma}_{\alpha}(X \vert Y) = \ren^{\gamma}_{\alpha}(X)$, therefore $h^{\gamma}_{\alpha}(X \vert Y) = h^{\gamma}_{\alpha}(X)$ as expected. Since the independence of $X$ and $Y$ means that all the conditionals $\prob_{X \vert Y=y}$ are equal to $\prob_{X}$, the fact that $h^{\gamma}_{\alpha}(X \vert Y) = h^{\gamma}_{\alpha}(X)$ in this case can also be verified from the expression in Remark \ref{rem: defrem}. The converse is also true, i.e. $h^{\gamma}_{\alpha}(X \vert Y) = h^{\gamma}_{\alpha}(X)$ implies the independence of $X$ and $Y$, if $\alpha \neq 0 , \infty$. This is noted later in Corollary \ref{cor:condredentr}.

Clearly, unlike conditional Shannon entropy, the conditional R\'enyi entropy is not the average R\'enyi entropy of the conditional distribution. The average R\'enyi entropy of the conditional distribution, \[ \tilde{h}^{\gamma}_{\alpha}(X\vert Y):= \E_{Y} h^{\gamma}_{\alpha} (X \vert Y=y),\] has been proposed as a candidate for conditional R\'enyi entropy, however it does not satisfy some properties one would expect such a notion to satisfy, like monotonicity (see \cite{FB14}). When $\alpha \geq 1$ it follows from Jensen's inequality that $h_{\alpha}^{\gamma} (X \vert Y) \leq \tilde{h}_{\alpha}^{\gamma} (X \vert Y)$, while the inequality is reversed when $0 < \alpha < 1$.

\section{Relation to R\'enyi Divergence}
\label{sec:div}

We continue to consider an $S$-valued random variable $X$ and a $T$-valued random variable $Y$ with a given joint distribution $\prob_{X,Y}$ with density $F$ with respect to $\gamma \otimes \eta$. Densities, etc are with respect to the fixed reference measures on the state spaces, unless mentioned otherwise. 

Let $\mu$ be a Borel probability measure with density $p$ on a Polish space $\Omega$ and let $\nu$ be a Borel measure with density $q$ on the same space with respect to a common measure $\gamma$.
\begin{Definition}[R\'enyi divergence] \label{def: renyidiv}
Suppose $\alpha \in (0,1) \cup (1, \infty)$. Then, the R\'enyi divergence of order $\alpha$ between measures $\mu$ and $\nu$ is defined as
\[
D_{\alpha} (\mu \Vert \nu) = \log \left( \int_{\Omega} p(x)^{\alpha}  q(x)^{1- \alpha}  \d \gamma (x) \right)^{\frac{1}{\alpha - 1}}.
\]
\end{Definition}
For order $0,1, \infty,$ the R\'enyi divergence is defined by the respective limits.
\begin{Definition}
\begin{enumerate}
\item[]
\item $D_{0}(\mu \Vert \nu) := \lim_{\alpha \to 0} D_{\alpha}(\mu \Vert \nu) = - \log \nu (\support(p))$;
\item $D_{1}(\mu \Vert \nu) := D(\mu \Vert \nu ) = \int p(x) \log \frac{p(x)}{q(x)} \d \gamma(x)$; and 
\item $D_{\infty}(\mu \Vert \nu) := \lim_{\alpha \to \infty} D_{\alpha}(\mu \Vert \nu) = \log \left( \essup_{\mu} \frac{p(x)}{q(x)} \right).$
\end{enumerate}
\end{Definition}
\begin{Remark}
These definitions are independent of the reference measure $\gamma$. 
\end{Remark}
\begin{Remark}
 $\lim_{\alpha \to 1} D_{\alpha}(\mu \Vert \nu)=1$ if some $ D_{\alpha}(\mu \Vert \nu) < \infty$. See \cite{EH14}.
\end{Remark}
The conditional R\'enyi entropy can be written in terms of R\'enyi divergence from the joint distribution using a generalized Sibson's identity we learnt from B. Nakibo\u{g}lu \cite{Nak20:pers} 
(also see \cite{Nak19:1}, and \cite{Sib69} where this identity for $\alpha \neq 1$ appears to originate from). The proof for abstract alphabets presented here is also 
due to B. Nakibo\u{g}lu \cite{Nak20:pers}, which simplifies our original proof \cite{AM19:isit} of the second formula below.

\begin{Theorem}\label{Theorem:conditionalanddivergence}
Let $X,Y$ be random variables taking vaules in spaces $S,T$ respectively. We assume they are jointly distributed with density $F$ with respect to the product reference measure $\gamma \otimes \eta$. For $\alpha \in (0, \infty)$, and any probability measure $\lambda$ absolutely continuous with respect to $\eta$, we have
\[
D_{\alpha} (\prob_{X,Y} \Vert \gamma \otimes \lambda) = D_{\alpha} (\prob_{X,Y} \Vert \gamma \otimes q_{\star}) + D_{\alpha} (q_{\star} \Vert \lambda) = - h^{\gamma}_{\alpha}(X \vert Y) + D_{\alpha}(q_{\star} \Vert \lambda),
\]
where $q_{\star} =\frac{\mu_{\star}}{\Vert \mu_{\star} \Vert}$, $\mu_{\star}$ is the measure having density $ \phi(y)=  \left( \int_{S} F(x,y)^{\alpha} \d \gamma (x) \right)^{\frac{1}{\alpha}}$ with respect to $\eta$, and $\Vert \mu_{\star} \Vert = \int_{T}\left( \int_{S} F(x,y)^{\alpha} \d \gamma (x) \right)^{\frac{1}{\alpha}} \d \eta (y)$ is the normalization factor.

As a consequence, we have
\[
h^{\gamma}_{\alpha}(X \vert Y) = - \min_{\lambda \in \mathcal{P} (T)} D_{\alpha} (\prob_{X,Y} \Vert \gamma \otimes \lambda ),
\]
where $\mathcal{P}(T)$ is the space of probability measures on $T$.
\end{Theorem}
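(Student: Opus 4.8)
The plan is to derive the first displayed identity by a one-line change of reference measure followed by Fubini, and then read off the variational formula from nonnegativity of R\'enyi divergence. I will treat $\alpha\neq 1$ explicitly; the case $\alpha=1$ is the classical chain rule for relative entropy (equivalently, obtained by continuity). Fix a probability measure $\lambda$ with density $\ell$ with respect to $\eta$, so that $\gamma\otimes\lambda$ has density $(x,y)\mapsto\ell(y)$ with respect to $\gamma\otimes\eta$. Applying Definition \ref{def: renyidiv} with common reference $\gamma\otimes\eta$ gives
\[
D_{\alpha}(\prob_{X,Y}\Vert\gamma\otimes\lambda)=\frac{1}{\alpha-1}\log\int_{S\times T}F(x,y)^{\alpha}\,\ell(y)^{1-\alpha}\,\d\gamma(x)\,\d\eta(y).
\]
Since the integrand is nonnegative, Tonelli's theorem permits integrating in $x$ first, which produces the factor $\phi(y)^{\alpha}=\int_{S}F(x,y)^{\alpha}\,\d\gamma(x)$ and leaves $\frac{1}{\alpha-1}\log\int_{T}\phi(y)^{\alpha}\ell(y)^{1-\alpha}\,\d\eta(y)$.

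Next I substitute $\phi=\Vert\mu_{\star}\Vert\cdot\frac{\d q_{\star}}{\d\eta}$, pull the constant $\Vert\mu_{\star}\Vert^{\alpha}$ out of the logarithm, and recognize $\int_{T}\big(\frac{\d q_{\star}}{\d\eta}\big)^{\alpha}\ell^{1-\alpha}\,\d\eta=\exp\big((\alpha-1)D_{\alpha}(q_{\star}\Vert\lambda)\big)$, again by Definition \ref{def: renyidiv} (this time with reference $\eta$). This yields
\[
D_{\alpha}(\prob_{X,Y}\Vert\gamma\otimes\lambda)=\frac{\alpha}{\alpha-1}\log\Vert\mu_{\star}\Vert+D_{\alpha}(q_{\star}\Vert\lambda).
\]
Comparing $\frac{\alpha}{\alpha-1}\log\Vert\mu_{\star}\Vert$ with the Fubini form of $\ren^{\gamma}_{\alpha}(X\vert Y)$ recorded just before Remark \ref{rem: defrem} identifies this term with $-h^{\gamma}_{\alpha}(X\vert Y)$; specializing to $\lambda=q_{\star}$ (where $D_{\alpha}(q_{\star}\Vert q_{\star})=0$) gives the middle equality $D_{\alpha}(\prob_{X,Y}\Vert\gamma\otimes q_{\star})=-h^{\gamma}_{\alpha}(X\vert Y)$, completing the chain of equalities claimed.

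For the consequence, I would invoke nonnegativity of R\'enyi divergence between probability measures, valid for every $\alpha\in(0,\infty)$, to get $D_{\alpha}(q_{\star}\Vert\lambda)\ge 0$ with equality at $\lambda=q_{\star}$; the identity then gives $D_{\alpha}(\prob_{X,Y}\Vert\gamma\otimes\lambda)\ge -h^{\gamma}_{\alpha}(X\vert Y)$ for all $\lambda\ll\eta$, attained at $q_{\star}$. To extend the infimum to all of $\mathcal{P}(T)$ I note that if $\lambda$ does not dominate $\prob_{Y}$ then $\prob_{X,Y}$ is not absolutely continuous with respect to $\gamma\otimes\lambda$ and the divergence is $+\infty$, so no such $\lambda$ competes for the minimum. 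The computation is essentially mechanical; the points requiring a little care are the justification of Tonelli (immediate from nonnegativity of the integrand), the standing requirement that $0<\Vert\mu_{\star}\Vert<\infty$ so that $q_{\star}$ is a bona fide probability measure (positivity is automatic since $F$ is a probability density, while finiteness is precisely the regime in which $h^{\gamma}_{\alpha}(X\vert Y)$ is finite), and the reduction of the minimization domain just described --- this last is the only step that is not pure bookkeeping.
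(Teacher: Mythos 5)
Your derivation of the identity itself is correct and is essentially the paper's own argument: write both measures as densities against $\gamma\otimes\eta$, integrate out $x$ by Tonelli to produce $\phi(y)^{\alpha}$, factor $\phi=\Vert\mu_{\star}\Vert\,\frac{\d q_{\star}}{\d\eta}$, recognize the remaining integral as $D_{\alpha}(q_{\star}\Vert\lambda)$ and the constant $\frac{\alpha}{\alpha-1}\log\Vert\mu_{\star}\Vert$ as $-h^{\gamma}_{\alpha}(X\vert Y)$ via the Fubini form of $\ren^{\gamma}_{\alpha}(X\vert Y)$, with $\alpha=1$ deferred to the classical case exactly as in the paper.

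The genuine gap is your final step, where you extend the minimization from $\lambda\ll\eta$ to all of $\mathcal{P}(T)$. You dismiss the remaining measures by claiming that if $\lambda$ does not dominate $\prob_{Y}$ then $D_{\alpha}(\prob_{X,Y}\Vert\gamma\otimes\lambda)=+\infty$. That claim fails for $\alpha\in(0,1)$: Rényi divergence of order below one is finite whenever the two measures are not mutually singular, so such $\lambda$ genuinely compete for the minimum in that range, and the theorem is asserted for all $\alpha\in(0,\infty)$. The dichotomy is also mismatched even for $\alpha>1$: what your identity requires is $\lambda\ll\eta$, not $\prob_{Y}\ll\lambda$, and a measure such as $\lambda=\tfrac12\prob_{Y}+\tfrac12\delta$ with $\delta\perp\eta$ dominates $\prob_{Y}$ (so your infinite-divergence argument does not exclude it) while lying outside the class for which you proved the identity. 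The repair is the observation recorded in the paper's remark after the theorem: $q_{\star}$, $h^{\gamma}_{\alpha}(X\vert Y)$ and the identity do not depend on the choice of $\eta$. Given an arbitrary $\lambda\in\mathcal{P}(T)$, rerun your computation with the reference measure $\eta+\lambda$, with respect to which both $\prob_{Y}$ and $\lambda$ have densities and $\prob_{X,Y}\ll\gamma\otimes(\eta+\lambda)$; the identity then holds verbatim for this $\lambda$, and nonnegativity of $D_{\alpha}(q_{\star}\Vert\lambda)$, with equality at $\lambda=q_{\star}$, gives the minimum over all of $\mathcal{P}(T)$.
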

\begin{proof}
Suppose $\lambda$ has density $h$ with respect to $\eta$. Then $\gamma \otimes \lambda$ has density $h(y)$ with respect to $\d \gamma (x) \d \eta (y)$. Now, for $\alpha \neq 1$,
\[
\begin{split}
D_{\alpha}(\prob_{X,Y} \Vert \gamma \otimes \lambda) &= \frac{1}{\alpha - 1} \log \int_{T} \int_{S} F(x,y)^{\alpha} h(y)^{1 - \alpha} \d \gamma (x) \d \eta (y) \\
&= \frac{1}{\alpha - 1 } \log \int_{T} h(y)^{1 - \alpha} \int_S F(x,y)^{\alpha} \d \gamma (x) \d \eta (y) \\
&= \frac{1}{\alpha - 1} \log \int_{T} h(y)^{1 - \alpha}\phi(y)^{\alpha} \d \eta (y) \\
&=\frac{1}{\alpha -1} \log \int_{T} \left( \frac{\d \lambda}{\d \eta} \right)^{1 - \alpha} \left( \frac{\d \mu_{\star}}{\d \eta} \right)^{\alpha} \d \eta (y) \\
&=\frac{1}{\alpha -1} \log \int_{T} \left( \frac{\d \lambda}{\d \eta} \right)^{1 - \alpha} \left( \frac{\d q_{\star}}{\d \eta} \right)^{\alpha} \Vert \mu_{\star} \Vert^{\alpha} \d \eta (y) \\
&= \frac{\alpha}{\alpha - 1} \log\Vert \mu_{\star} \Vert + \frac{1}{\alpha -1} \log \int_{T} \left( \frac{\d \lambda}{\d \eta} \right)^{1 - \alpha} \left( \frac{\d q_{\star}}{\d \eta} \right)^{\alpha} \d \eta (y) \\
& = \frac{\alpha}{\alpha - 1} \log \Vert \mu_{\star} \Vert + D_{\alpha} (q_{\star} \Vert \lambda) \\
&= - h^{\gamma}_{\alpha}(X \vert Y) + D_{\alpha}(q_{\star} \Vert \lambda).
\end{split}
\]
The case $\alpha =1$ is straightforward and well-known, and the optimal $q_{\star}$ in this case is the distribution of $Y$.
\end{proof}

\begin{Remark} 
The identities above and the measure $q_{\star}$ are independent of the reference measure $\eta$. The measure $\eta$ is used only to write out the R\'enyi divergence concretely in terms of densities. 
\end{Remark}
%

\section{Basic properties}
\label{sec:pty}

\subsection{Special orders}

We will now look at some basic properties of the conditional R\'enyi entropy we have defined above. 
First we see that the conditional R\'enyi entropy is consistent with the notion of conditional Shannon entropy of $X$ given $Y$ defined by 
\[
h^{\gamma} (X \vert Y) = - \int_T \int_S F(x,y) \log \frac{F(x,y)}{g(y)} \d \gamma(x) \d \eta(y).
\] 

\begin{Proposition}\label{Proposition:lim1}
\[
\lim_{\alpha \to 1^{+}} h^{\gamma}_{\alpha}(X \vert Y) = h^{\gamma}_{1}(X \vert Y) = h^{\gamma}(X \vert Y),
\]
if $ h^{\gamma}_{\alpha}(X \vert Y)< \infty$ for some $\alpha > 1$.
\end{Proposition}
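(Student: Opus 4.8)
The plan is to establish the limiting identity by working with the variational/divergence formula from Theorem \ref{Theorem:conditionalanddivergence}, since that expresses $h^{\gamma}_{\alpha}(X\vert Y)$ in a way that behaves well under limits. Concretely, I would use the generalized Sibson identity with the particular choice $\lambda = \prob_Y$, which gives
\[
h^{\gamma}_{\alpha}(X\vert Y) = -D_{\alpha}(\prob_{X,Y}\Vert\gamma\otimes\prob_Y) + D_{\alpha}(q_\star \Vert \prob_Y),
\]
and then argue that each of the two terms on the right converges as $\alpha\to 1^+$ to the corresponding Shannon-type quantity: $D_{\alpha}(\prob_{X,Y}\Vert\gamma\otimes\prob_Y)\to D(\prob_{X,Y}\Vert\gamma\otimes\prob_Y)$, which is precisely $-h^{\gamma}(X\vert Y)$ by the definition of conditional Shannon entropy recalled just above the Proposition, and $D_{\alpha}(q_\star\Vert\prob_Y)\to D(q_\star^{(1)}\Vert \prob_Y)$ where $q_\star$ itself depends on $\alpha$. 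Since as $\alpha\to 1$ the density $\phi(y)=(\int_S F(x,y)^\alpha\d\gamma(x))^{1/\alpha}$ tends (pointwise, and one hopes in $L^1(\eta)$) to $\int_S F(x,y)\d\gamma(x) = g(y)$, the normalized measure $q_\star$ tends to $\prob_Y$, so the second divergence term should vanish in the limit, leaving $\lim_{\alpha\to1^+}h^{\gamma}_{\alpha}(X\vert Y) = h^{\gamma}(X\vert Y)$.

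Alternatively — and perhaps more cleanly — I would avoid tracking the $\alpha$-dependence of $q_\star$ by using the \emph{minimization} form: $h^{\gamma}_{\alpha}(X\vert Y) = -\min_{\lambda}D_{\alpha}(\prob_{X,Y}\Vert\gamma\otimes\lambda)$. For the upper bound on the $\limsup$, fix $\lambda=\prob_Y$ and use $h^{\gamma}_{\alpha}(X\vert Y)\geq -D_{\alpha}(\prob_{X,Y}\Vert\gamma\otimes\prob_Y)$ together with the known continuity of R\'enyi divergence in the order at $\alpha=1$ (the second Remark after Definition \ref{def: renyidiv}), which forces $\liminf_{\alpha\to1^+}h^{\gamma}_{\alpha}(X\vert Y)\geq -D(\prob_{X,Y}\Vert\gamma\otimes\prob_Y) = h^{\gamma}(X\vert Y)$. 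Wait — I should be careful about the direction; let me instead use monotonicity of $h^{\gamma}_{\alpha}$ in $\alpha$ (established later in the paper as a basic property, but I would rather not forward-reference) — so better to stick with the direct two-term analysis above, handling each term separately.

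So the concrete steps are: (1) write $h^{\gamma}_{\alpha}(X\vert Y)$ via Sibson's identity with $\lambda=\prob_Y$; (2) invoke the order-continuity of R\'enyi divergence at $1$ to get $D_{\alpha}(\prob_{X,Y}\Vert\gamma\otimes\prob_Y)\to D(\prob_{X,Y}\Vert\gamma\otimes\prob_Y)=-h^{\gamma}(X\vert Y)$, using the finiteness hypothesis ($h^{\gamma}_{\alpha}(X\vert Y)<\infty$ for some $\alpha>1$, which via the identity bounds the relevant divergence and lets the Remark apply); (3) show the correction term $D_{\alpha}(q_\star^{(\alpha)}\Vert\prob_Y)\to 0$ by showing $q_\star^{(\alpha)}\to\prob_Y$ and that the divergence is continuous there. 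For step (3), the key estimate is that $\phi_\alpha(y)=\big(\int_S F(x,y)^\alpha\,\d\gamma(x)\big)^{1/\alpha}$ converges to $g(y)$ as $\alpha\to1^+$: pointwise this is essentially continuity of $L^p$-norms in $p$ applied to $F(\cdot,y)$, and for the convergence of the normalization $\Vert\mu_\star\Vert = \int_T\phi_\alpha\,\d\eta \to \int_T g\,\d\eta = 1$ one would want a dominated-convergence or monotonicity argument, e.g.\ controlling $\phi_\alpha$ above and below by $\phi$ at the endpoint orders.

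The main obstacle is exactly step (3): justifying that the correction term vanishes requires uniform-integrability-type control on $\phi_\alpha$ as $\alpha\downarrow1$, and establishing convergence $q_\star^{(\alpha)}\to\prob_Y$ \emph{in a sense strong enough} that $D_\alpha(q_\star^{(\alpha)}\Vert\prob_Y)\to 0$ — since R\'enyi divergence is not weakly continuous in general, one needs the finiteness hypothesis to supply the necessary integrability. A cleaner route that sidesteps this: differentiate/expand directly, i.e.\ use the representation in Remark \ref{rem: defrem},
\[
h^{\gamma}_{\alpha}(X\vert Y) = \frac{\alpha}{1-\alpha}\log\int_T e^{-\frac{1-\alpha}{\alpha}D_\alpha(\prob_{X\vert Y=y}\Vert\gamma)}\,\d\prob_Y(y),
\]
and show that as $\alpha\to1$ the exponent $-\frac{1-\alpha}{\alpha}D_\alpha(\prob_{X\vert Y=y}\Vert\gamma)\to 0$ with the integral behaving like $1 - \frac{1-\alpha}{\alpha}\int_T D(\prob_{X\vert Y=y}\Vert\gamma)\,\d\prob_Y + o(1-\alpha)$, so that $\frac{\alpha}{1-\alpha}\log(\cdots)\to -\int_T D(\prob_{X\vert Y=y}\Vert\gamma)\,\d\prob_Y(y) = h^{\gamma}(X\vert Y)$; this again needs a dominated-convergence justification but keeps everything in terms of fixed conditional distributions rather than the $\alpha$-varying $q_\star$. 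I would present the Sibson-identity argument as the main line and relegate the integrability bookkeeping (the finiteness hypothesis being used precisely there) to a short paragraph.
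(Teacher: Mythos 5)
Your main line has a genuine gap exactly where you flag it: step (3). With $\lambda=\prob_Y$ the Sibson identity gives $D_{\alpha}(q_{\star}^{(\alpha)}\Vert\prob_Y)=D_{\alpha}(\prob_{X,Y}\Vert\gamma\otimes\prob_Y)+h^{\gamma}_{\alpha}(X\vert Y)$, so once you grant that the first term converges to $-h^{\gamma}(X\vert Y)$, proving that the correction term vanishes is \emph{equivalent} to the proposition itself; a non-circular proof via $\phi_{\alpha}\to g$ would require controlling a R\'enyi divergence in which the measure, the order, and the normalization all vary simultaneously, and neither this nor the dominated-convergence expansion of the Remark~\ref{rem: defrem} representation is actually carried out. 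Note also that the term you are trying to kill is nonnegative, so discarding it already yields everything the decomposition can give: $h^{\gamma}_{\alpha}(X\vert Y)\geq -D_{\alpha}(\prob_{X,Y}\Vert\gamma\otimes\prob_Y)$, hence $\liminf_{\alpha\to 1^{+}}h^{\gamma}_{\alpha}(X\vert Y)\geq h^{\gamma}(X\vert Y)$; the other half cannot come from step (3) without circularity. (A further caveat: applying the order-continuity remark to $D_{\alpha}(\prob_{X,Y}\Vert\gamma\otimes\prob_Y)$ needs this divergence to be finite for some $\alpha>1$, and the identity plus $D_{\alpha}(q_{\star}\Vert\prob_Y)\geq 0$ gives only a \emph{lower} bound on it, so your parenthetical that the hypothesis ``bounds the relevant divergence'' does not by itself deliver that finiteness.)

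The route you abandoned is the correct one, and your reason for abandoning it is misplaced. The missing upper bound does not require the paper's later proposition on monotonicity of $h^{\gamma}_{\alpha}$ in $\alpha$ (which alone would not identify the limit with Shannon conditional entropy anyway); it requires only the standard monotonicity of R\'enyi divergence in its order: for $\alpha>1$ and every $\lambda\in\mathcal{P}(T)$, $D_{\alpha}(\prob_{X,Y}\Vert\gamma\otimes\lambda)\geq D_{1}(\prob_{X,Y}\Vert\gamma\otimes\lambda)$, so minimizing over $\lambda$ and using the $\alpha=1$ case of Theorem~\ref{Theorem:conditionalanddivergence} (where the optimal measure is $\prob_Y$) gives $h^{\gamma}_{\alpha}(X\vert Y)\leq h^{\gamma}(X\vert Y)$ for all $\alpha>1$. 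Combined with your liminf bound from fixing $\lambda=\prob_Y$, this sandwiches the limit and is precisely the paper's proof; the paper invokes monotonicity of $h^{\gamma}_{\alpha}$ in $\alpha$ only to assert that the limit exists, which the $\liminf/\limsup$ formulation makes unnecessary.
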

\begin{proof}
We will use the formula in \ref{Theorem:conditionalanddivergence}. By the monotonicity in the order $\alpha$ of $h_{\alpha}(X \vert Y)$, all limits $\lim_{\alpha \to 1^{+}}, \lim_{\alpha \to 1^{-}}, \lim_{\alpha \to 0} = \lim_{\alpha \to 0^{+}}$ exist. 
Furthermore, for every $\eta$, 
\[ \min_{\lambda} D_{\alpha} (P_{X,Y} \Vert \gamma \otimes \lambda) \leq D_{\alpha} (P_{X,Y} \Vert \gamma \otimes \eta),\] so
\[ \lim_{\alpha \to 1^{+}} \min_{\lambda} D_{\alpha} (P_{X,Y} \Vert \gamma \otimes \lambda) \leq \lim_{\alpha \to 1^{+}} D_{\alpha} (P_{X,Y} \Vert \gamma \otimes \eta),\] that is, 
\[ - \lim_{\alpha \to 1^{+}} h^{\gamma}_{\alpha} (X \vert Y) \leq D (P_{X,Y} \Vert \gamma \otimes \eta).\]  
Now by minimizing over $\eta$ and hitting both sides with a minus sign yields,
\[ \lim_{\alpha \to 1^{+}} h^{\gamma}_{\alpha}(X \vert Y) \geq h^{\gamma} (X \vert Y).\] 

Suppose $\alpha  \geq 1$, then by nondecreasing-ness of the R\'enyi divergence in order, for every $\lambda$ we have
\[ D_{\alpha} (P_{X,Y} \Vert \gamma \otimes \lambda) \geq D (P_{X,Y} \Vert \gamma \otimes \lambda),\] and so by minimization over $\lambda$ and hitting with minus signs we obtain $h^{\gamma}_{\alpha} (X \vert Y) \leq h^{\gamma} (X \vert Y)$. This shows that \[\lim_{\alpha \to 1^{+}} h^{\gamma}_{\alpha} (X \vert Y) = h^{\gamma}(X \vert Y).\]  
\end{proof}

We can extend our definition of R\'enyi probability of order $\alpha$ to $\alpha = 0$ by taking limits, thereby obtaining
\[
\ren^{\gamma}_{0}(X) = \frac{1}{\gamma(\support(f))}.
\]
In the next proposition we define the conditional R\'enyi entropy of order $0$ and record a consequence. 

\begin{Proposition}\label{Proposition:lim0}
\[
h^{\gamma}_{0}(X \vert Y) := \Vert h^{\gamma}_{0}(X \vert Y=y ) \Vert_{L^\infty (\prob_{Y})} = \lim_{\alpha \to 0}h^{\gamma}_{\alpha} (X \vert Y)
\]
\end{Proposition}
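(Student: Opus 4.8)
The plan is to route everything through the divergence identity of Remark~\ref{rem: defrem} and to reduce the claim to the standard convergence of $L^{r}$-norms to the $L^{\infty}$-norm on the probability space $(T,\prob_{Y})$. Set
\[
u_{\alpha}(y):=e^{-D_{\alpha}(\prob_{X\vert Y=y}\Vert\gamma)}=\left(\int_{S}\left(\tfrac{F(x,y)}{g(y)}\right)^{\alpha}\d\gamma(x)\right)^{\frac{1}{1-\alpha}},\qquad r(\alpha):=\frac{1-\alpha}{\alpha},
\]
so that Remark~\ref{rem: defrem} reads $h^{\gamma}_{\alpha}(X\vert Y)=\frac{1}{r(\alpha)}\log\int_{T}u_{\alpha}(y)^{r(\alpha)}\,\d\prob_{Y}(y)=\log\Vert u_{\alpha}\Vert_{L^{r(\alpha)}(\prob_{Y})}$, and $r(\alpha)\to\infty$ as $\alpha\to0$.

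First I would determine the $\alpha\to0$ behaviour of $u_{\alpha}$. Splitting the inner integral over $\{0<F(\cdot,y)/g(y)\le1\}$ and $\{F(\cdot,y)/g(y)>1\}$ and applying monotone convergence on the first piece and dominated convergence on the second (with dominating function $F(\cdot,y)/g(y)\in L^{1}(\gamma)$, available because the conditional density integrates to $1$), the inner integral tends to $\gamma(\{x:F(x,y)>0\})$ for $\prob_{Y}$-a.e.\ $y$; hence, setting $u_{0}(y):=\gamma(\{x:F(x,y)>0\})=e^{h^{\gamma}_{0}(X\vert Y=y)}\in(0,\infty]$ (the last equality being exactly the $\alpha\to0$ limit recorded just before this proposition), we have $u_{\alpha}(y)\to u_{0}(y)$ a.e. Moreover, since R\'enyi divergence is nondecreasing in its order, $\alpha\mapsto u_{\alpha}(y)$ is nonincreasing, so in fact $u_{\alpha}\uparrow u_{0}$ as $\alpha\downarrow0$.

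The two matching bounds then follow. Put $M:=\Vert u_{0}\Vert_{L^{\infty}(\prob_{Y})}\in(0,\infty]$. Since $\prob_{Y}$ is a probability measure and $u_{\alpha}\le u_{0}$, we get $\Vert u_{\alpha}\Vert_{L^{r(\alpha)}(\prob_{Y})}\le\Vert u_{\alpha}\Vert_{L^{\infty}(\prob_{Y})}\le M$ for every $\alpha$, so $\limsup_{\alpha\to0}h^{\gamma}_{\alpha}(X\vert Y)\le\log M$. Conversely, fix $\alpha_{0}>0$; for $\alpha<\alpha_{0}$ monotonicity gives $u_{\alpha}\ge u_{\alpha_{0}}$, whence $\Vert u_{\alpha}\Vert_{L^{r(\alpha)}(\prob_{Y})}\ge\Vert u_{\alpha_{0}}\Vert_{L^{r(\alpha)}(\prob_{Y})}$, and letting $\alpha\to0$ (so $r(\alpha)\to\infty$) the standard $L^{r}\to L^{\infty}$ limit on a finite measure space yields $\liminf_{\alpha\to0}h^{\gamma}_{\alpha}(X\vert Y)\ge\log\Vert u_{\alpha_{0}}\Vert_{L^{\infty}(\prob_{Y})}$; since $u_{\alpha_{0}}\uparrow u_{0}$ as $\alpha_{0}\downarrow0$ one has $\Vert u_{\alpha_{0}}\Vert_{L^{\infty}(\prob_{Y})}\uparrow M$ (for any $c<M$ the sets $\{u_{\alpha_{0}}>c\}$ increase to one of positive $\prob_{Y}$-measure), so $\liminf\ge\log M$ too. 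Combining, $\lim_{\alpha\to0}h^{\gamma}_{\alpha}(X\vert Y)=\log M=\essup_{\prob_{Y}}\log u_{0}(y)=\essup_{\prob_{Y}}h^{\gamma}_{0}(X\vert Y=y)=\Vert h^{\gamma}_{0}(X\vert Y=y)\Vert_{L^{\infty}(\prob_{Y})}$.

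The only genuinely delicate point I anticipate is the pointwise limit $u_{\alpha}(y)\to\gamma(\support(F(\cdot,y)))$ and its interaction with the essential supremum over $y$; this is handled cleanly by the monotonicity of $u_{\alpha}$ in $\alpha$ (a consequence of monotonicity of R\'enyi divergence in the order, already used above), which also removes any need for an Egorov-type uniformization. The rest is bookkeeping --- checking that every inequality remains valid in $[0,\infty]$ in the degenerate cases where $\gamma(\support(F(\cdot,y)))$, or its $\prob_{Y}$-essential supremum $M$, is infinite.
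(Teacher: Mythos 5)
Your proof is correct, but it follows a genuinely different route from the paper's. The paper proves this limit exactly as it proves the $\alpha \to 1$ case (Proposition~\ref{Proposition:lim1}): it invokes the variational formula of Theorem~\ref{Theorem:conditionalanddivergence}, $h^{\gamma}_{\alpha}(X \vert Y) = -\min_{\lambda \in \mathcal{P}(T)} D_{\alpha}(\prob_{X,Y} \Vert \gamma \otimes \lambda)$, and sandwiches the limit using monotonicity of $D_{\alpha}$ in the order together with $D_{0} = \lim_{\alpha \to 0} D_{\alpha}$; implicitly it also identifies $-\min_{\lambda} D_{0}(\prob_{X,Y} \Vert \gamma \otimes \lambda)$ with $\essup_{\prob_Y} h^{\gamma}_{0}(X \vert Y = y)$, i.e.\ the order-$0$ case of the Sibson identity, a step it does not spell out. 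You instead work directly from the formula in Remark~\ref{rem: defrem}, reading $h^{\gamma}_{\alpha}(X \vert Y)$ as the logarithm of an $L^{r(\alpha)}(\prob_Y)$-norm of your $u_{\alpha}$ with $r(\alpha) = \frac{1-\alpha}{\alpha} \to \infty$, establish $u_{\alpha} \uparrow u_{0}$ pointwise (monotone/dominated convergence plus monotonicity of $D_{\alpha}$ in the order), and conclude via the classical $L^{r} \to L^{\infty}$ limit on a probability space, with a limsup/liminf sandwich in which the fixed-$\alpha_0$ comparison handles the interchange of limits. What your route buys: it is self-contained and elementary, it in effect proves the order-$0$ identification that the paper's argument takes for granted, and it treats the infinite degenerate cases explicitly. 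What the paper's route buys: brevity and uniformity --- once Theorem~\ref{Theorem:conditionalanddivergence} is available, the same short argument covers both the $\alpha \to 1$ and $\alpha \to 0$ limits. One shared convention: like the paper, you read $\Vert h^{\gamma}_{0}(X \vert Y=y) \Vert_{L^{\infty}(\prob_Y)}$ as the $\prob_Y$-essential supremum of $h^{\gamma}_{0}(X \vert Y=y)$ rather than of its absolute value; with that reading your argument is complete.
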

\begin{proof}
We will again use the formula from \ref{Theorem:conditionalanddivergence} in this proof. Just as in the last proof, for every probability measure $\eta$,
\[ \min_{\lambda} D_{\alpha} (P_{X,Y} \Vert \gamma \otimes \lambda) \leq D_{\alpha} (P_{X,Y} \Vert \gamma \otimes \eta),\] so
\[ \lim_{\alpha \to 0} \min_{\lambda} D_{\alpha} (P_{X,Y} \Vert \gamma \otimes \lambda) \leq \lim_{\alpha \to 0} D_{\alpha} (P_{X,Y} \Vert \gamma \otimes \eta),\] that is, 
\[ - \lim_{\alpha \to 0} h^{\gamma}_{\alpha} (X \vert Y) \leq D_{0} (P_{X,Y} \Vert \gamma \otimes \eta).\] 
Now by minimizing over $\eta$ and hitting both sides with a minus sign yields,
\[ \lim_{\alpha \to 0} h^{\gamma}_{\alpha}(X \vert Y) \geq h^{\gamma}_{0}(X \vert Y).\] 
Suppose $\alpha  \geq 0$, then by nondecreasing-ness of the R\'enyi divergence in order, for every $\lambda$ we have
\[ D_{\alpha} (P_{X,Y} \Vert \gamma \otimes \lambda) \geq D_{0} (P_{X,Y} \Vert \gamma \otimes \lambda),\] and so by minimization over $\lambda$ and hitting with minus signs we obtain $h^{\gamma}_{\alpha} (X \vert Y) \leq h^{\gamma}_{0} (X \vert Y)$. This shows that \[\lim_{\alpha \to 0} h^{\gamma}_{\alpha} (X \vert Y) = h^{\gamma}_{0}(X \vert Y).\]
\end{proof}

\subsection{Monotonicity in order}

The (unconditional) R\'enyi entropy decreases with order, and the same is true of the conditional version. 

\begin{Proposition}\label{Proposition: conditional renyi is monotone in order}
For random variables $X$ and $Y$, 
\[
h^{\gamma}_{\beta}(X \vert Y) \leq h^{\gamma}_{\alpha} (X \vert Y),
\]
for all $0 < \alpha \leq \beta \leq \infty$.
\end{Proposition}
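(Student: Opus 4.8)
The plan is to deduce monotonicity directly from the variational characterization in Theorem~\ref{Theorem:conditionalanddivergence}, namely $h^{\gamma}_{\alpha}(X \vert Y) = - \inf_{\lambda \in \mathcal{P}(T)} D_{\alpha}(\prob_{X,Y} \Vert \gamma \otimes \lambda)$ (valid for $\alpha \in (0,\infty)$), combined with the standard fact that R\'enyi divergence $D_{\alpha}(\cdot \Vert \cdot)$ is nondecreasing in the order $\alpha$ --- the same fact already invoked in the proofs of Propositions~\ref{Proposition:lim1} and~\ref{Proposition:lim0}.

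First I would fix $0 < \alpha \le \beta < \infty$. For an arbitrary $\lambda \in \mathcal{P}(T)$, monotonicity of the divergence in the order gives $D_{\alpha}(\prob_{X,Y} \Vert \gamma \otimes \lambda) \le D_{\beta}(\prob_{X,Y} \Vert \gamma \otimes \lambda)$. Taking the infimum over $\lambda$ on the left and choosing a (nearly) optimal $\lambda$ for order $\beta$ on the right, one obtains $\inf_{\lambda} D_{\alpha}(\prob_{X,Y} \Vert \gamma \otimes \lambda) \le \inf_{\lambda} D_{\beta}(\prob_{X,Y} \Vert \gamma \otimes \lambda)$: concretely, for any $\varepsilon > 0$ pick $\lambda_{\varepsilon}$ with $D_{\beta}(\prob_{X,Y}\Vert\gamma\otimes\lambda_{\varepsilon}) \le \inf_{\lambda} D_{\beta}(\prob_{X,Y}\Vert\gamma\otimes\lambda) + \varepsilon$, chain the two inequalities, and let $\varepsilon \to 0$. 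Negating yields $h^{\gamma}_{\beta}(X\vert Y) \le h^{\gamma}_{\alpha}(X\vert Y)$, which is the claim on the open interval $(0,\infty)$. When the minimizer $q_{\star}$ of Theorem~\ref{Theorem:conditionalanddivergence} is a genuine probability measure one may use $\min$ in place of $\inf$ and dispense with the $\varepsilon$; the degenerate cases are trivial since then $h^{\gamma}_{\alpha}(X\vert Y) = +\infty$ already.

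It then remains to cover the endpoints $\alpha = 0$ and $\beta = \infty$. For these I would recall that $h^{\gamma}_{0}(X\vert Y) = \lim_{\alpha \to 0} h^{\gamma}_{\alpha}(X\vert Y)$ by Proposition~\ref{Proposition:lim0} and that $h^{\gamma}_{\infty}(X\vert Y)$ is defined by the corresponding limit $\lim_{\alpha \to \infty} h^{\gamma}_{\alpha}(X\vert Y)$ (in analogy with Proposition~\ref{Proposition:lim0}); these monotone limits exist precisely because of the inequality just established, and passing to the limit in that inequality extends the conclusion to all $0 \le \alpha \le \beta \le \infty$.

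The main obstacle is essentially bookkeeping rather than substance, once Theorem~\ref{Theorem:conditionalanddivergence} is available: one must make sure the variational formula is applied only on the range $(0,\infty)$ where it was proved and treat the two endpoints separately by limits, and one must be slightly careful about whether the infimum over $\lambda$ is attained, which is why I phrase the middle step with $\inf$ and an $\varepsilon$-optimal $\lambda$. An alternative, self-contained route straight from the formula $\ren^{\gamma}_{\alpha}(X\vert Y) = \big(\int_T (\int_S F(x,y)^{\alpha}\,\d\gamma(x))^{1/\alpha}\,\d\eta(y)\big)^{\alpha/(\alpha-1)}$, combining monotonicity of $L^p$-type norms in $p$ with a Jensen/power-mean estimate for the outer integral, would also work but is messier --- it forces one to track separately the sign of $\alpha - 1$ and the direction of the outer mean inequality --- so the divergence route is preferable.
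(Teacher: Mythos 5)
Your argument is correct, but it is not the route the paper takes. The paper proves Proposition~\ref{Proposition: conditional renyi is monotone in order} directly from the definition: it writes $\ren^{\gamma}_{\beta}(X\vert Y)$ as an outer average of conditional R\'enyi probabilities, invokes monotonicity of the \emph{unconditional} R\'enyi entropy in the order for each fixed $y$, and then handles the outer integral with a power-mean/Jensen step (exponent $e=\frac{\beta-1}{\beta}\frac{\alpha}{\alpha-1}$), splitting the cases $1<\alpha\le\beta$ and $0<\alpha\le\beta<1$ and finishing the endpoints and $\alpha=1$ by limits --- exactly the ``messier'' alternative you sketch in your last paragraph. Your route instead pushes the monotonicity of $D_{\alpha}$ in the order through the variational identity of Theorem~\ref{Theorem:conditionalanddivergence}; since the infimum of a pointwise-smaller family is smaller, $\inf_{\lambda}D_{\alpha}\le\inf_{\lambda}D_{\beta}$ follows at once (your $\varepsilon$-optimal $\lambda$ is not even needed), and negation gives the claim uniformly for $\alpha\in(0,\infty)$, with $\alpha=1$ included and no sign-of-$\alpha-1$ case analysis. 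What this buys is brevity and uniformity; what it costs is reliance on Theorem~\ref{Theorem:conditionalanddivergence} plus the external fact that $D_{\alpha}(\mu\Vert\nu)$ is nondecreasing in $\alpha$ even when the second argument $\gamma\otimes\lambda$ is only $\sigma$-finite (true by power-mean monotonicity of $r\mapsto (\E_{\mu}Z^{r})^{1/r}$ with $Z$ the likelihood ratio, and a fact the paper itself uses in Propositions~\ref{Proposition:lim1} and~\ref{Proposition:lim0}). Two small points of care: the paper's proofs of Propositions~\ref{Proposition:lim1} and~\ref{Proposition:lim0} cite monotonicity of $h^{\gamma}_{\alpha}(X\vert Y)$ for existence of limits, so when you appeal to Proposition~\ref{Proposition:lim0} at the endpoint $\alpha=0$ you should note you only use the identification of $h^{\gamma}_{0}(X\vert Y)$ with the limit, whose existence your already-proved inequality on $(0,\infty)$ supplies, so there is no circularity; and your parenthetical that in degenerate cases ``$h^{\gamma}_{\alpha}(X\vert Y)=+\infty$ already'' has the sign backwards when $\alpha>1$ (if $\Vert\mu_{\star}\Vert=\infty$ then $h^{\gamma}_{\alpha}(X\vert Y)=-\infty$ for $\alpha>1$), though the inequality remains trivial in extended reals either way.
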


The proof is essentially the same as in the discrete setting, and follows from Jensen's inequality.

\begin{proof}
We obtain separately the cases $1 < \alpha \leq \beta < \infty$, $0 < \alpha \leq \beta <1$, so that the entire Proposition follows 
from taking limits and the transitivity of inequality. Let $\alpha \leq \beta$ be positive numbers, set $e= \frac{\beta - 1}{\beta} \frac{\alpha}{\alpha - 1}$. Consider the following argument. 
\[
\begin{split}
\ren_{\beta}(X \vert Y) &= \left( \int_{\support(\prob_{Y})} \ren^{\gamma}_{\beta} (X \vert Y=y)^{\frac{\beta - 1}{\beta}}  \d \prob_{Y} \right)^{\frac{\beta}{\beta - 1}} \\
& \geq \left( \int_{\support(\prob_{Y})} \ren^{\gamma}_{\alpha} (X \vert Y=y)^{\frac{\beta - 1}{\beta}}  \d \prob_{Y} \right)^{\frac{\beta}{\beta - 1}} \\
&= \left( \int_{\support(\prob_{Y})} \ren^{\gamma}_{\alpha} (X \vert Y=y)^{\frac{\alpha - 1}{\alpha}e}  \d \prob_{Y} \right)^{\frac{\beta}{\beta - 1}} \\
& \geq \left( \int_{\support(\prob_{Y})} \ren^{\gamma}_{\alpha} (X \vert Y=y)^{\frac{\alpha - 1}{\alpha}}  \d \prob_{Y} \right)^{e \frac{\beta}{\beta - 1}} \\
&= \ren^{\gamma}_{\alpha}(X \vert Y).
\end{split}
\]
The first inequality above follows from the fact that the unconditional R\'enyi entropy (probability) decreases (increases) with order. Note that $e \geq 1$ when $1 < \alpha \leq \beta< \infty$ and hence the function $r \mapsto r^{e}$ is convex making the second inequality an application of Jensen's inequality in this case. When $0< \alpha \leq \beta <1$, the exponent satisfies $0 < e \leq 1$ so the function $r \mapsto r^{e}$ is concave but the outer exponent $\frac{\beta}{\beta -1}$ is negative which turns the (second) inequality in the desired direction.  
\end{proof}

\subsection{Conditioning reduces R\'enyi entropy}

As is the case for Shannon entropy, we find that the conditional R\'enyi entropy obeys monotonicity too; 
the proof of the theorem below adapts the approach for the discrete case taken in \cite{FB14} by using Minkowski's integral inequality.

\begin{Theorem}\label{thm:mono}[Monotonicity]
Let $\alpha \in [0, \infty]$, and $X$ be $S$-valued and $Y,Z$ be $T$-valued random variables. Then,
\[
h^{\gamma}_{\alpha}(X \vert YZ) \leq h^{\gamma}_{\alpha} (X \vert Z).
\]
\end{Theorem}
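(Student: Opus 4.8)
The plan is to prove the statement by reducing it, via the tower structure of conditioning, to the two-variable case and then invoking Minkowski's integral inequality, exactly as in the discrete argument of \cite{FB14}. First I would write out $h^{\gamma}_{\alpha}(X\vert YZ)$ and $h^{\gamma}_{\alpha}(X\vert Z)$ using the Fubini form of $\ren^{\gamma}_{\alpha}$: if $G(x,y,z)$ is the joint density of $(X,Y,Z)$ with respect to $\gamma\otimes\eta\otimes\eta$ (and $F(x,z)=\int_T G(x,y,z)\,\d\eta(y)$ the marginal density of $(X,Z)$), then
\[
\ren^{\gamma}_{\alpha}(X\vert YZ) = \left(\int_T\int_T \left(\int_S G(x,y,z)^{\alpha}\,\d\gamma(x)\right)^{\frac1\alpha}\d\eta(y)\,\d\eta(z)\right)^{\frac{\alpha}{\alpha-1}},
\]
while $\ren^{\gamma}_{\alpha}(X\vert Z)$ has the same shape with $F(x,z)=\int_T G(x,y,z)\,\d\eta(y)$ in place of $G(x,y,z)^{\alpha}$-integrated-then-rooted. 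So after fixing $z$, the comparison reduces to showing
\[
\left(\int_S\left(\int_T G(x,y,z)\,\d\eta(y)\right)^{\alpha}\d\gamma(x)\right)^{\frac1\alpha} \; \lessgtr \; \int_T\left(\int_S G(x,y,z)^{\alpha}\,\d\gamma(x)\right)^{\frac1\alpha}\d\eta(y),
\]
which is precisely Minkowski's integral inequality for the $L^{\alpha}(S,\gamma)$ norm of the $\eta$-average of the family $\{G(\cdot,y,z)\}_{y}$: the norm of the integral is at most the integral of the norms when $\alpha\ge 1$, and the reverse holds when $0<\alpha<1$.

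The second step is to track how these two inequality directions interact with the outer exponent $\frac{\alpha}{\alpha-1}$ and the monotonicity of $-\log$. For $\alpha>1$: Minkowski gives that the inner $z$-integrand for $X\vert Z$ is $\le$ that for $X\vert YZ$ pointwise in $z$, hence $\ren^{\gamma}_{\alpha}(X\vert Z)^{\frac{\alpha-1}{\alpha}} \le \ren^{\gamma}_{\alpha}(X\vert YZ)^{\frac{\alpha-1}{\alpha}}$ after integrating in $z$ (the exponent $\frac{\alpha-1}{\alpha}>0$ preserves direction), and raising to $\frac{\alpha}{\alpha-1}>0$ keeps $\ren^{\gamma}_{\alpha}(X\vert Z)\le\ren^{\gamma}_{\alpha}(X\vert YZ)$; applying $-\log$ flips this to $h^{\gamma}_{\alpha}(X\vert Z)\ge h^{\gamma}_{\alpha}(X\vert YZ)$, as desired. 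For $0<\alpha<1$: the reverse Minkowski inequality makes the $X\vert Z$ integrand $\ge$ the $X\vert YZ$ integrand in $z$; but now the outer exponent $\frac{\alpha}{\alpha-1}$ is negative, which flips the direction so that once again $\ren^{\gamma}_{\alpha}(X\vert Z)\le\ren^{\gamma}_{\alpha}(X\vert YZ)$, and $-\log$ gives the same conclusion. (One small care point: the inner exponent $\frac{\alpha-1}{\alpha}$ is also negative for $\alpha<1$, so I should be careful to combine the reverse Minkowski step with the $z$-integration correctly — the cleanest route is to phrase everything directly in terms of $\ren$ rather than mixing in the conditional-R\'enyi-probability-of-$y$ repeatedly, to minimize sign bookkeeping.)

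The third step is to handle the boundary orders $\alpha\in\{0,1,\infty\}$ by taking limits. For $\alpha=1$ this is the classical ``conditioning reduces conditional Shannon entropy'' statement $h^{\gamma}(X\vert YZ)\le h^{\gamma}(X\vert Z)$, which is standard; alternatively it follows from Proposition \ref{Proposition:lim1} together with the $\alpha>1$ case just proved. For $\alpha=0$ and $\alpha=\infty$ I would use Propositions \ref{Proposition:lim0} and \ref{Proposition: conditional renyi is monotone in order} (monotonicity in order guarantees the relevant one-sided limits exist), pass to the limit in the inequality $h^{\gamma}_{\alpha}(X\vert YZ)\le h^{\gamma}_{\alpha}(X\vert Z)$ valid for $\alpha\in(0,1)\cup(1,\infty)$, and note that limits preserve non-strict inequalities. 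I expect the main obstacle to be purely the sign/exponent bookkeeping in the $0<\alpha<1$ regime — there the inner norm-exponent, the outer Hölder-conjugate exponent, and the reverse direction of Minkowski's inequality all carry minus signs or reversed inequalities, and it is easy to make an off-by-one-flip error; writing the whole chain as a single displayed computation in terms of $\ren^{\gamma}_{\alpha}$ (as in the proof of Proposition \ref{Proposition: conditional renyi is monotone in order}) is the safest way to keep it honest. The analytic content — Minkowski's integral inequality on $L^{\alpha}(S,\gamma)$, valid for $\sigma$-finite $\gamma$ and any $\alpha>0$ with the appropriate direction — is the only nontrivial ingredient, and it goes through verbatim on Polish spaces with $\sigma$-finite reference measures, so no measure-theoretic subtleties beyond Fubini (which we have already used) should arise.
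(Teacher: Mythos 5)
Your proposal is correct and follows essentially the same route as the paper's proof: the key ingredient in both is Minkowski's integral inequality on $L^{\alpha}(S,\gamma)$ (reversed for $0<\alpha<1$), with the negative outer exponent $\frac{\alpha}{\alpha-1}$ restoring the desired direction in that regime, and the boundary orders $\alpha\in\{0,1,\infty\}$ handled by limits. The only organizational difference is that you apply Minkowski pointwise in $z$ to the three-variable joint density and then integrate, whereas the paper first proves the case of empty $Z$ and then lifts it to general $Z$ via an averaging observation conditional on $Z=z$ --- the same computation packaged in two steps.
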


\begin{proof}
We begin by proving the result for an empty $Z$.

First, we deal with the case $ 1 < \alpha < \infty$. In terms of R\'enyi probabilities we must show that conditioning increases R\'enyi probability. Indeed, 
\begin{eqnarray}\label{eq:ren-prob-mink}
\begin{split}
\ren^{\gamma}_{\alpha}(X \vert Y ) 
&= \left( \int_T
   \left[ 
   \int_S F(x,y)^{\alpha} \d \gamma (x)
   \right]^{\frac{1}{\alpha}} \d \eta(y)
 \right)^{\frac{\alpha}{\alpha - 1}}
 \\
& \geq 
\left[ \int_S
 \left( \int_T F(x,y) \d \eta (y) \right)^{\alpha} \d \gamma (x)
\right]^{\frac{1}{\alpha} \frac{\alpha}{\alpha - 1 }} \\ 
&= \left( \int_S f(x)^{\alpha} \d \gamma(x) \right)^{\frac{1}{\alpha - 1}} = \ren^{\gamma}_{\alpha} (X).
\end{split}
\end{eqnarray}
The inequality above is a direct application of Minkowski's integral inequality \cite[Theorem 2.4]{LL01:book}, 
which generalizes the summation in the standard triangle inequality to integrals against a measure.

For the case $0 < \alpha < 1$, we apply the triangle inequality for $1 / \alpha$ then the fact that now $\frac{1}{\alpha - 1}$ is negative flips the inequality in the desired direction:
\[
\begin{split}
\ren^{\gamma}_{\alpha}(X \vert Y )
&= \left( \int_T
   \left[ 
   \int_S F(x,y)^{\alpha} \d \gamma (x)
   \right]^{\frac{1}{\alpha}} \d \eta(y)
 \right)^{\frac{\alpha}{\alpha - 1}}
 \\ 
 & \geq 
\left[ \int_S
 \left( \int_T F(x,y)^{\frac{\alpha}{\alpha}} \d \eta (y) \right)^{\alpha} \d \gamma (x)
\right]^{ \frac{1}{\alpha - 1 }} \\ 
&= \left( \int_S f(x)^{\alpha} \d \gamma(x) \right)^{\frac{1}{\alpha - 1}} = \ren^{\gamma}_{\alpha} (X).
\end{split}
\] 

Now to extend this to non-empty $Z$ we observe the following. 
Suppose $V$ is an $S$-valued random variable, $W$ is a $T$-valued random variable and $ h \in \R$ is such that
\[
h^{\gamma}_{\alpha}(V \vert W, Z = z) \geq h^{\gamma}_{\alpha}(X \vert Y, Z = z) + h,
\]
for every $z$ in the support of $\prob_{Z}$. Then 
\[
h^{\gamma}_{\alpha}(V \vert WZ) \geq h^{\gamma}_{\alpha}(X \vert Y Z) + h.
\]
In terms of Renyi probabilities this means that if for every $z \in \support (\prob_{Z})$, \[ \ren^{\gamma}_{\alpha}(V\vert W , Z = z)  \leq \frac{\ren^{\gamma}_{\alpha} (X \vert  Y, Z = z)}{\log h}, \] then 
\[
\ren^{\gamma}_{\alpha}(V \vert W Z) \leq \frac{\ren^{\gamma}_{\alpha} (X \vert YZ)}{\log h }.
\]
Indeed, 
\[
\begin{split}
\ren^{\gamma}_{\alpha}(V \vert W Z) 
&= \left( \int_{\support (Z)} \ren^{\gamma}_{\alpha}(V \vert W , Z = z)^\frac{\alpha - 1}{\alpha} \d \prob_{Z}(z)  \right) ^{\frac{\alpha}{\alpha - 1}} \\ & \leq 
\left( \int_{\support (Z)} \left(  \frac{\ren^{\gamma}_{\alpha} (X \vert  Y, Z = z)}{\log h}  \right)^\frac{\alpha - 1}{\alpha} \d \prob_{Z}(z)  \right) ^{\frac{\alpha}{\alpha - 1}} \\
&= \frac{\ren^{\gamma}_{\alpha} (X \vert YZ)}{\log h },
\end{split}
\]
since the functions $r \mapsto r^{\frac{\alpha - 1}{\alpha}}$ and $r \mapsto r^{\frac{\alpha}{\alpha - 1}}$ are both strictly increasing or strictly decreasing, based on the value of $\alpha$.
Finally, the claim for non-empty $Z$ follows from this observation given we have already demonstrated $h^{\gamma}_{\alpha}(X \vert Y , Z = z) \leq h^{\gamma}_{\alpha}(X \vert Z = z)$ throughout $\support (Z)$. 
The cases $\alpha = 0 , \infty$ are obtained by taking limits. For $\alpha = 1$ this is well-known.
\end{proof}
\begin{Corollary}
If $ X \to Y \to Z$ forms a Markov chain, then $h^{\gamma}_{\alpha}(X \vert Y) \leq h^{\gamma}_{\alpha} (X \vert Z)$. 
\end{Corollary}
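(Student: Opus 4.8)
The plan is to derive the Corollary from the Monotonicity theorem (Theorem~\ref{thm:mono}) together with the observation that, along a Markov chain, conditioning on the pair $(Y,Z)$ carries no more information about $X$ than conditioning on $Y$ alone. Concretely, I would first invoke Theorem~\ref{thm:mono} to obtain $h^{\gamma}_{\alpha}(X \vert YZ) \leq h^{\gamma}_{\alpha}(X \vert Z)$, so that it suffices to prove the identity $h^{\gamma}_{\alpha}(X \vert YZ) = h^{\gamma}_{\alpha}(X \vert Y)$ whenever $X \to Y \to Z$ is a Markov chain.

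For that identity I would disintegrate $\prob_{X,Y,Z}$ into regular conditional distributions (which exist since the alphabets are Polish) and use the defining property of the Markov chain, namely that $X$ and $Z$ are conditionally independent given $Y$. This gives, for $\prob_{Y,Z}$-almost every $(y,z)$, the equality of conditional laws $\prob_{X \vert Y=y, Z=z} = \prob_{X \vert Y=y}$, hence of the corresponding densities with respect to $\gamma$, and therefore $\ren^{\gamma}_{\alpha}(X \vert Y=y, Z=z) = \ren^{\gamma}_{\alpha}(X \vert Y=y)$ for $\prob_{Y,Z}$-a.e. $(y,z)$. Substituting this into the definition of $\ren^{\gamma}_{\alpha}(X \vert YZ)$ as a weighted mean of conditional R\'enyi probabilities, the integrand depends on $(y,z)$ only through $y$, so the integral against $\prob_{Y,Z}$ collapses, by marginalization, to $\int_T \ren^{\gamma}_{\alpha}(X \vert Y=y)^{\frac{\alpha-1}{\alpha}} \, \d \prob_{Y}(y)$; raising to the power $\frac{\alpha}{\alpha - 1}$ and taking $-\log$ yields $h^{\gamma}_{\alpha}(X \vert YZ) = h^{\gamma}_{\alpha}(X \vert Y)$.

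Combining the two displays gives $h^{\gamma}_{\alpha}(X \vert Y) \leq h^{\gamma}_{\alpha}(X \vert Z)$ for $\alpha \in (0,1) \cup (1,\infty)$, and the cases $\alpha \in \{0, 1, \infty\}$ follow by passing to the limit, exactly as at the end of the proof of Theorem~\ref{thm:mono} (for $\alpha = 1$ this is the classical data-processing inequality for conditional Shannon entropy).

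The main obstacle I anticipate is purely measure-theoretic: making precise the step ``conditioning on $(Y,Z)$ is the same as conditioning on $Y$'' on abstract Polish alphabets, i.e.\ choosing regular conditional probabilities, extracting the $\prob_{Y,Z}$-a.e.\ identity $\prob_{X \vert Y=y, Z=z} = \prob_{X \vert Y=y}$ from the Markov property, and checking that the support and normalization conventions built into the definition of $\ren^{\gamma}_{\alpha}(\cdot \vert \cdot)$ are compatible with the marginalization of $\prob_{Y,Z}$ onto $\prob_{Y}$. The algebraic manipulation of the H\"older-conjugate exponents is immediate once this identification is in place.
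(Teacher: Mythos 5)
Your proof is correct and is essentially the argument the paper intends: the Corollary is presented as an immediate consequence of Theorem~\ref{thm:mono}, obtained by combining $h^{\gamma}_{\alpha}(X \vert YZ) \leq h^{\gamma}_{\alpha}(X \vert Z)$ with the identity $h^{\gamma}_{\alpha}(X \vert YZ) = h^{\gamma}_{\alpha}(X \vert Y)$, which holds because the Markov property gives $\prob_{X \vert Y=y, Z=z} = \prob_{X \vert Y=y}$ for $\prob_{Y,Z}$-a.e.\ $(y,z)$, so the weighted mean defining $\ren^{\gamma}_{\alpha}(X \vert YZ)$ collapses to that defining $\ren^{\gamma}_{\alpha}(X \vert Y)$. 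The measure-theoretic points you flag (regular conditional distributions on Polish spaces) are handled exactly as you describe and pose no obstacle.
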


When we specialize to ``empty $Z$'' (i.e., the $\sigma$-field generated by $Z$ being the Borel $\sigma$-field on $T$, or not conditioning
on anything),  we find that ``conditioning reduces R\'enyi entropy''. 

\begin{Corollary}\label{cor:condredentr}
Let $\alpha \in (0,\infty)$. Then $h^{\gamma}_{\alpha}(X \vert Y) \leq h^{\gamma}_{\alpha} (X)$, with equality \textit{iff} $X$ and $Y$ are independent. 
\end{Corollary}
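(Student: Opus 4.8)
Both the stated inequality and the ``if'' half of the equality claim are already available. The inequality $h^{\gamma}_{\alpha}(X \vert Y) \leq h^{\gamma}_{\alpha}(X)$ is the instance of Theorem~\ref{thm:mono} with $Z$ empty, whose proof in fact establishes $\ren^{\gamma}_{\alpha}(X \vert Y) \geq \ren^{\gamma}_{\alpha}(X)$; and when $X,Y$ are independent, $F(x,y)=f(x)g(y)$ yields $\bigl( \int_{S} F(x,y)^{\alpha} \d \gamma (x) \bigr)^{1/\alpha} = g(y)\Vert f\Vert_{\alpha}$, so $\ren^{\gamma}_{\alpha}(X\vert Y) = \bigl(\Vert f\Vert_{\alpha}\int_{T} g \d \eta\bigr)^{\alpha/(\alpha-1)} = \Vert f\Vert_{\alpha}^{\alpha/(\alpha-1)} = \ren^{\gamma}_{\alpha}(X)$, using $\int_{T} g \d \eta = 1$. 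Hence the content still to be proved is that equality implies independence. I would treat $\alpha \in (0,1)\cup(1,\infty)$ directly ($\alpha=1$ being the classical statement that $I(X;Y)=0$ iff $X$ and $Y$ are independent), under the proviso that $\ren^{\gamma}_{\alpha}(X) \in (0,\infty)$, i.e.\ $f \in L^{\alpha}(\gamma)$; without such a finiteness assumption both sides can be infinite of the same sign and the equality carries no information.

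The key observation is that equality in the corollary is exactly equality in the single application of Minkowski's integral inequality inside the proof of Theorem~\ref{thm:mono}. Writing $\phi(y) = \bigl( \int_{S} F(x,y)^{\alpha} \d \gamma (x) \bigr)^{1/\alpha} = \Vert F(\cdot,y)\Vert_{L^{\alpha}(\gamma)}$ and recalling $f(x) = \int_{T} F(x,y) \d \eta(y)$, the strictly monotone outer power $r \mapsto r^{\alpha/(\alpha-1)}$ may be stripped off, so $\ren^{\gamma}_{\alpha}(X \vert Y) = \ren^{\gamma}_{\alpha}(X)$ is the same as
\[
\left\Vert \int_{T} F(\cdot,y) \d \eta(y) \right\Vert_{L^{\alpha}(\gamma)} = \int_{T} \Vert F(\cdot,y)\Vert_{L^{\alpha}(\gamma)} \d \eta(y).
\]
For $\alpha > 1$ this is the equality case of Minkowski's integral inequality; for $0<\alpha<1$ it is the equality case of the \emph{reverse} Minkowski integral inequality (valid for that exponent, with $\Vert\cdot\Vert_{L^{\alpha}(\gamma)}$ the usual quasi-norm), which is the pertinent direction precisely because $\alpha/(\alpha-1)<0$. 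In both regimes the classical characterization of the equality case for non-negative kernels forces $F$ to factor: there are measurable $a\colon T\to[0,\infty)$ and $\psi\colon S\to[0,\infty)$ with $F(x,y) = a(y)\psi(x)$ for $(\gamma\otimes\eta)$-a.e.\ $(x,y)$.

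Given the factorization, independence drops out by integrating out one variable at a time: $f(x) = \psi(x)\int_{T} a \d \eta$, $g(y) = a(y)\int_{S} \psi \d \gamma$, and $\int_{S\times T} F \d(\gamma\otimes\eta) = 1$ forces $\bigl(\int_{T} a \d \eta\bigr)\bigl(\int_{S} \psi \d \gamma\bigr) = 1$; multiplying the two marginal formulas then gives $F(x,y) = f(x)g(y)$ a.e., i.e.\ $\prob_{X,Y} = \prob_{X}\otimes\prob_{Y}$. Together with the independence-implies-equality direction already recorded, this closes the equivalence.

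The step I expect to require the most care is the equality analysis of the integral Minkowski inequality in the generality of $\sigma$-finite $(S,\gamma)$, $(T,\eta)$: the forward case $\alpha>1$ is textbook, but the reverse case $0<\alpha<1$ is less commonly stated with its equality condition, so I would either cite it or reproduce the short argument --- run the proof of the reverse inequality through the reverse H\"older inequality and track where equality forces $F(x,y) = c(y)f(x)$ for a.e.\ $(x,y)$ --- while being mildly careful about null sets on which $f$ or a marginal of $F$ vanishes. The bookkeeping for the degenerate infinite-entropy cases excluded at the outset is routine.
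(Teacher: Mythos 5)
Your proposal is correct and follows essentially the same route as the paper: the inequality is the empty-$Z$ case of Theorem~\ref{thm:mono}, and the equality case is reduced to the equality condition in the (forward, respectively reverse) Minkowski integral inequality used in display \eqref{eq:ren-prob-mink}, which forces the factorization $F(x,y)=\phi(x)\psi(y)$ a.e.\ and hence independence, with the same finiteness proviso the paper imposes and the classical argument for $\alpha=1$. Your extra details (the direct check that independence gives equality and the marginal computation turning the factorization into $F=fg$) are just explicit versions of steps the paper treats briefly.
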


While the inequality in Corollary~\ref{cor:condredentr} follows immediately from Theorem~\ref{thm:mono}, the conditions for
equality follow from those for Minkowski's inequality (which is the key inequality used in the proof of Theorem~\ref{thm:mono},
see, e.g., \cite[Theorem 2.4]{LL01:book}): 
given the finiteness of both sides in the display \eqref{eq:ren-prob-mink}, equality holds if and only if 
$F(x,y)= \phi(x) \psi(y)$ $\gamma \otimes \eta$-a.e. for some functions $\phi$ and $\psi$. In our case, this means that 
equality holds in $h^{\gamma}_{\alpha}(X \vert Y) \leq h^{\gamma}_{\alpha} (X)$ if and only if $X$ and $Y$ are independent ($\alpha \in (0,1)\cup (1, \infty)$). The corresponding statement for $\alpha = 1$ is well-known. 

Since $\tilde{h}_{\alpha}^{\gamma} (X \vert Y) \leq h_{\alpha}^{\gamma} (X \vert Y)$ when $0 < \alpha < 1$, 
as noted in Section \ref{sec:defn}, we have ``conditioning reduces R\'enyi entropy'' in this case as well. 

\begin{Corollary}
$\tilde{h}_{\alpha}^{\gamma} (X \vert Y) \leq h^{\gamma}_{\alpha}(X)$, when $0 < \alpha < 1$. 
\end{Corollary}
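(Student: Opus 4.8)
The plan is to derive this Corollary as an immediate consequence of two facts already in hand, so there is essentially nothing new to prove beyond recording the correct chain of inequalities.

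First I would invoke the comparison between $\tilde{h}_{\alpha}^{\gamma}(X \vert Y)$ and $h_{\alpha}^{\gamma}(X \vert Y)$ noted at the end of Section~\ref{sec:defn}: in the regime $0 < \alpha < 1$ one has $\tilde{h}_{\alpha}^{\gamma}(X \vert Y) \leq h_{\alpha}^{\gamma}(X \vert Y)$. The justification is a one-line application of Jensen's inequality to the representation in Remark~\ref{rem: defrem}. Writing $t(y) := D_{\alpha}(\prob_{X \vert Y=y} \Vert \gamma)$, one checks from the definitions that $h_{\alpha}^{\gamma}(X \vert Y=y) = -t(y)$, so that $\tilde{h}_{\alpha}^{\gamma}(X \vert Y) = -\E_{Y} t(Y)$, whereas by Remark~\ref{rem: defrem} one has $h_{\alpha}^{\gamma}(X \vert Y) = \tfrac{\alpha}{1-\alpha} \log \E_{Y} e^{-\frac{1-\alpha}{\alpha} t(Y)}$. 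Since $\tfrac{1-\alpha}{\alpha} > 0$, the function $x \mapsto e^{-\frac{1-\alpha}{\alpha} x}$ is convex, so Jensen gives $\E_{Y} e^{-\frac{1-\alpha}{\alpha} t(Y)} \geq e^{-\frac{1-\alpha}{\alpha} \E_{Y} t(Y)}$; applying $\tfrac{\alpha}{1-\alpha} \log(\cdot)$ to both sides (legitimate in the same direction because $\tfrac{\alpha}{1-\alpha} > 0$) yields exactly $h_{\alpha}^{\gamma}(X \vert Y) \geq \tilde{h}_{\alpha}^{\gamma}(X \vert Y)$.

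Then I would chain this with Corollary~\ref{cor:condredentr}, which is valid for all $\alpha \in (0,\infty)$ and in particular for $\alpha \in (0,1)$, and gives $h_{\alpha}^{\gamma}(X \vert Y) \leq h_{\alpha}^{\gamma}(X)$. Combining the two,
\[
\tilde{h}_{\alpha}^{\gamma}(X \vert Y) \;\leq\; h_{\alpha}^{\gamma}(X \vert Y) \;\leq\; h_{\alpha}^{\gamma}(X),
\]
which is the asserted inequality.

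There is no real obstacle: the only point needing care is the direction of Jensen's inequality, which is dictated by the sign of $\tfrac{1-\alpha}{\alpha}$ — positive precisely when $0 < \alpha < 1$, which is exactly the hypothesis. For $\alpha \geq 1$ the same computation reverses the first inequality (this is the $h_{\alpha}^{\gamma}(X \vert Y) \leq \tilde{h}_{\alpha}^{\gamma}(X \vert Y)$ already recorded in Section~\ref{sec:defn}), so this route deliberately says nothing in that regime.
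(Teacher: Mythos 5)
Your proposal is correct and follows the paper's own route exactly: the paper derives this corollary by chaining the observation from Section~\ref{sec:defn} that $\tilde{h}_{\alpha}^{\gamma}(X\vert Y)\leq h_{\alpha}^{\gamma}(X\vert Y)$ for $0<\alpha<1$ (a Jensen's inequality argument, which you spell out correctly, including the sign of $\tfrac{1-\alpha}{\alpha}$) with Corollary~\ref{cor:condredentr}. Nothing is missing.
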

\begin{Remark}
The above corollary is not true for large values of $\alpha$. For a counter-example, see \cite[Theorem 7]{TMA12}.
\end{Remark}

From the special case when $Y$ is discrete random variable taking finitely many values $y_{i}$ with probability $p_{i}$, $1 \leq i \leq n$, 
and the conditional density of $X$ given $Y=y_{i}$ is $f_{i}(x)$, we obtain the concavity of R\'enyi entropy (for orders below 1) 
which is already known in the literature.

\begin{Corollary} Let $0 \leq \alpha \leq 1$. Suppose $f_{i}$ are densities on $S$ and $p_{i}$ non-negative numbers, $1 \leq i \leq n$, such that $\sum_{i}p_{i} = 1$. Then,
$h^{\gamma}_{\alpha}(\sum_{i=1}^{n} p_{i}f_{i}) \geq 
 \sum_{i=1}^{n} p_{i} h^{\gamma}_{\alpha}(f_{i})$.
\end{Corollary}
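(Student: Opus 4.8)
The plan is to recognize $\sum_{i=1}^n p_i f_i$ as the $S$-marginal of a suitable joint density, so that the asserted concavity becomes a direct instance of results already proved: ``conditioning reduces R\'enyi entropy'' together with the comparison $\tilde h^{\gamma}_{\alpha}(X\vert Y)\le h^{\gamma}_{\alpha}(X\vert Y)$ valid for $\alpha<1$.

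First I would set up the discrete conditioning variable. Take $T=\{y_1,\dots,y_n\}$ as a (discrete) Polish space with counting measure $\eta$, and let $(X,Y)$ have joint density $F(x,y_i)=p_i f_i(x)$ with respect to $\gamma\otimes\eta$. Then $\prob_{Y}(\{y_i\})=p_i$, the conditional density of $X$ given $Y=y_i$ is $F(x,y_i)/g(y_i)=f_i(x)$ on $\{i:p_i>0\}$, and the $S$-marginal of $F$ is $f=\sum_{i=1}^n p_i f_i$, i.e.\ $f$ is exactly the density of $X$. Hence $h^{\gamma}_{\alpha}(X)=h^{\gamma}_{\alpha}\big(\sum_i p_i f_i\big)$ and $\tilde h^{\gamma}_{\alpha}(X\vert Y)=\E_{Y}\,h^{\gamma}_{\alpha}(X\vert Y=y)=\sum_{i=1}^n p_i\, h^{\gamma}_{\alpha}(f_i)$.

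For $0<\alpha<1$ I would then simply chain two inequalities. By the remark at the end of Section~\ref{sec:defn}, $\tilde h^{\gamma}_{\alpha}(X\vert Y)\le h^{\gamma}_{\alpha}(X\vert Y)$; and by Corollary~\ref{cor:condredentr}, $h^{\gamma}_{\alpha}(X\vert Y)\le h^{\gamma}_{\alpha}(X)$. Combining, $\sum_i p_i\, h^{\gamma}_{\alpha}(f_i)\le h^{\gamma}_{\alpha}\big(\sum_i p_i f_i\big)$ — which is exactly the claim. (Equivalently one may invoke the Corollary $\tilde h^{\gamma}_{\alpha}(X\vert Y)\le h^{\gamma}_{\alpha}(X)$ stated just above this one.)

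Finally I would handle the two endpoints. The case $\alpha=1$ is the classical concavity of Shannon (differential) entropy, obtained from the same chain using the $\alpha=1$ facts that $\tilde h^{\gamma}_{1}(X\vert Y)=h^{\gamma}_{1}(X\vert Y)$ and that conditioning reduces entropy. For $\alpha=0$, using $h^{\gamma}_{0}(f)=\log\gamma(\support(f))$: since $\support\big(\sum_i p_i f_i\big)\supseteq\support(f_j)$ for every $j$ with $p_j>0$, we get $h^{\gamma}_{0}\big(\sum_i p_i f_i\big)\ge h^{\gamma}_{0}(f_j)$ for all such $j$, and forming the convex combination $\sum_j p_j(\cdot)$ preserves this bound; alternatively, let $\alpha\to 0^+$ in the case already proved. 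I do not expect a real obstacle: the only points needing a little care are checking that the $S$-marginal of $F$ is indeed $\sum_i p_i f_i$, discarding indices with $p_i=0$ (which contribute nothing on either side), and keeping all quantities referred to the same reference measure $\gamma$ throughout.
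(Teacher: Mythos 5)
Your proposal is correct and is essentially the paper's own argument: the paper also obtains this corollary by specializing to a discrete $Y$ taking values $y_i$ with probabilities $p_i$ and conditional densities $f_i$, so that the bound $\tilde h^{\gamma}_{\alpha}(X\vert Y)\le h^{\gamma}_{\alpha}(X)$ (via $\tilde h^{\gamma}_{\alpha}\le h^{\gamma}_{\alpha}(X\vert Y)\le h^{\gamma}_{\alpha}(X)$ for $\alpha<1$) becomes exactly the stated concavity. Your explicit treatment of the endpoints $\alpha=0,1$ is a minor addition the paper leaves implicit, but the route is the same.
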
 

\subsection{A chain rule}
In this subsection we deduce a version of the chain rule from Theorem \ref{Theorem:conditionalanddivergence}. For the discrete case, 
this has been done by Fehr and Berens in \cite[Theorem 3]{FB14}.
If $\eta$ is a probability measure, then we have 
$h_{\alpha}^{\gamma}(X \vert Y) \geq - D_{\alpha}(\prob_{X,Y} \Vert \gamma \otimes \eta) = h_{\alpha}^{\gamma \otimes \eta}(X,Y)$. 
If we relax the condition on $\eta$ to be a measure under which $\prob_{Y}$ is absolutely continuous and supported on a set of finite measure, we obtain 
$h_{\alpha}^{\gamma}(X \vert Y) \geq h_{\alpha}^{\gamma \otimes \eta}(X,Y) - h_{0}^{\eta}(Y)$. 
Since this inequality trivially holds true when $Y$ is supported on a set of infinite $\eta$-measure, we have proved
the following inequality that (although weaker being an inequality rather an identity) is reminiscent of the chain rule for Shannon entropy.
 
\begin{Proposition}\label{Proposition:chain}
For any $\alpha>0$,
\[
h^{\gamma \otimes \eta}_{\alpha}(X,Y)\leq h^{\gamma}_{\alpha} (X \vert Y)  + h^{\eta}_{0}(Y) .
\]
\end{Proposition}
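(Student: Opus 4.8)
The plan is to deduce the inequality from the variational identity
$h^{\gamma}_{\alpha}(X\vert Y) = -\min_{\lambda\in\mathcal{P}(T)} D_{\alpha}(\prob_{X,Y}\Vert\gamma\otimes\lambda)$
of Theorem~\ref{Theorem:conditionalanddivergence}, by feeding it a probability measure built from $\eta$. First dispose of the trivial case: if $\eta$ gives infinite mass to $\support(g)$ (equivalently to $\support(\prob_Y)$), then $h^{\eta}_{0}(Y)=\log\eta(\support(g))=+\infty$ and there is nothing to prove. So assume $c:=\eta(\support(g))\in(0,\infty)$ — positivity is automatic since $\prob_Y\ll\eta$ is a probability measure — and note that by the definition of $h^{\eta}_0$ we have $h^{\eta}_{0}(Y)=\log c$.

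Now pass from $\eta$ to the probability measure $\tilde\eta$ with density $\tfrac1c\ind_{\support(g)}$ with respect to $\eta$; one checks $\tilde\eta\in\mathcal{P}(T)$ and $\prob_Y\ll\tilde\eta$ (hence $\prob_{X,Y}\ll\gamma\otimes\tilde\eta$), so Theorem~\ref{Theorem:conditionalanddivergence} applied with $\lambda=\tilde\eta$ gives $h^{\gamma}_{\alpha}(X\vert Y)\geq -D_{\alpha}(\prob_{X,Y}\Vert\gamma\otimes\tilde\eta)$. Next relate the two divergences by a direct change-of-reference computation: taking $\gamma\otimes\eta$ as the common reference in the definition of $D_\alpha$, the joint density $F$ is supported on $S\times\support(g)$, so the factor $(\tfrac1c)^{1-\alpha}=c^{\alpha-1}$ pulls out of the integral and, since $\tfrac{1}{\alpha-1}\log c^{\alpha-1}=\log c$ regardless of whether $\alpha\gtrless 1$, one gets $D_{\alpha}(\prob_{X,Y}\Vert\gamma\otimes\tilde\eta)=D_{\alpha}(\prob_{X,Y}\Vert\gamma\otimes\eta)+\log c$. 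Unwinding the definitions of R\'enyi divergence and R\'enyi entropy gives $-D_{\alpha}(\prob_{X,Y}\Vert\gamma\otimes\eta)=h^{\gamma\otimes\eta}_{\alpha}(X,Y)$, so combining the displays yields
$h^{\gamma}_{\alpha}(X\vert Y)\geq h^{\gamma\otimes\eta}_{\alpha}(X,Y)-\log c = h^{\gamma\otimes\eta}_{\alpha}(X,Y)-h^{\eta}_{0}(Y)$,
which is the claim after rearranging. If one wants the endpoint orders $\alpha\in\{0,1,\infty\}$, they follow by taking limits together with the statements already recorded above.

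All the calculations are routine; the only points needing care are the measure-theoretic bookkeeping — verifying that $\tilde\eta$ is a genuine probability measure with $\prob_{X,Y}\ll\gamma\otimes\tilde\eta$, and that restricting $\eta$ to $\support(g)$ changes none of the relevant integrals (it does not, since $F(\cdot,y)=0$ for $\eta$-a.e. $y\notin\support(g)$) — and checking that the exponent $\tfrac{1}{\alpha-1}$ hits the constant $c^{\alpha-1}$ in a sign-neutral way so the same identity holds on both sides of $\alpha=1$. I expect the main (mild) obstacle to be simply phrasing the hypotheses on $\eta$ precisely enough that $h^{\gamma\otimes\eta}_{\alpha}(X,Y)$, $h^{\gamma}_{\alpha}(X\vert Y)$, $h^{\eta}_{0}(Y)$ and the divergences involved are all simultaneously well-defined.
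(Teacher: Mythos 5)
Your proposal is correct and is essentially the paper's own derivation, namely the one sketched in the paragraph immediately preceding the Proposition: apply Theorem~\ref{Theorem:conditionalanddivergence} with $\lambda$ taken to be $\eta$ renormalized on $\support(\prob_Y)$, observe that this change of reference costs exactly $\log \eta(\support(\prob_Y)) = h^{\eta}_{0}(Y)$, and note the claim is trivial when that measure is infinite. (The displayed proof environment in the paper instead establishes the same bound by a self-contained Jensen's-inequality computation on $\ren^{\gamma}_{\alpha}(X\vert Y)$, but the substance is the same; also, your appeal to limits for $\alpha=1$ is unnecessary, since the identity $D_{1}(\prob_{X,Y}\Vert \gamma\otimes\tilde\eta)=D_{1}(\prob_{X,Y}\Vert \gamma\otimes\eta)+\log c$ holds by the same direct computation.)
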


\begin{proof}
Recall that
\[
\ren^{\gamma}_{\alpha}(X \vert Y) = \left[ \int_T \left( \int_S F(x,y)^{\alpha} \d \gamma (x) \right)^{\frac{1}{\alpha}} \d \eta (y) \right]^{\frac{\alpha}{\alpha - 1}},
\]
where the outer integral can be restricted to the support of $\prob_{Y}$, which we will keep emphasizing in the first few steps.
\[
\begin{split}
\ren^{\gamma}_{\alpha}(X \vert Y) 
&= \left[ \int_{\support(\prob_{Y})} \left( \int_S F(x,y)^{\alpha} \d \gamma (x) \right)^{\frac{1}{\alpha}} \d \eta (y) \right]^{\frac{\alpha}{\alpha - 1}} \\
&= \left[ \int_{\support(\prob_{Y})} \left( \eta (\support(\prob_{Y})) \int_S F(x,y)^{\alpha} \d \gamma (x) \right)^{\frac{1}{\alpha}} \d \frac{\eta (y)}{\eta(\support(\prob_{Y}))} \right]^{\frac{\alpha}{\alpha - 1}} \\
&= \left[ \int_{\support(\prob_{Y})} \left(  \int_S \left( \eta (\support(\prob_{Y}))F(x,y) \right)^{\alpha} \d \gamma (x) \right)^{\frac{1}{\alpha}} \d \frac{\eta (y)}{\eta(\support(\prob_{Y}))} \right]^{\frac{\alpha}{\alpha - 1}} .
\end{split}
\] 
By Jensen's inequality, when $\alpha > 1$,
\[
\begin{split}
\ren^{\gamma}_{\alpha}(X \vert Y) 
& \leq \left[ \int_{\support(\prob_{Y})} \left(  \int_S \left( \eta (\support(\prob_{Y}))F(x,y) \right)^{\alpha} \d \gamma (x) \right) \d \frac{\eta (y)}{\eta(\support(\prob_{Y}))} \right]^{\frac{1}{\alpha - 1}} \\
&= \eta(\support (\prob_{Y})) \left[ \int_{\support(\prob_{Y})} \int_S F(x,y)^{\alpha} \d \gamma (x) \d \eta (y)  \right]^{\frac{1}{\alpha - 1}}
\\
& = \eta (\support(\prob_{Y})) \ren^{\gamma \otimes \eta}_{\alpha}(X,Y).
\end{split}
\] 
Note that the above calculation also holds when $\alpha \in (0,1)$ because even though Jensen's inequality is flipped because $\frac{1}{\alpha}$ is now convex, the inequality flips again, now to the desired side, since the exponent $\frac{\alpha}{\alpha - 1 }$ is negative. Taking logarithms and hitting both sides with a minus sign concludes the proof. 
\end{proof}
\begin{Remark}
These inequalities are tight. Equality is attained when $X$, $Y$ are independent and $Y$ is uniformly distributed on finite support. 
\end{Remark}

\subsection{Sensitivity to reference measure}

\begin{Proposition}\label{Proposition:compare}
Suppose $\frac{\d \gamma}{\d \mu} = \psi$ which is bounded by a number $M$. Then $
h^{\mu}_{\alpha}(X \vert Y) \geq  h^{\gamma}_{\alpha}(X \vert Y) - \log M.$
\end{Proposition}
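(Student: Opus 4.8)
The plan is to rewrite $\ren^{\mu}_{\alpha}(X\vert Y)$ so that it involves exactly the same joint density that appears in $\ren^{\gamma}_{\alpha}(X\vert Y)$, with a single extra weight $\psi(x)^{\alpha-1}$ inside the innermost integral, and then to bound that weight by a constant using $\psi\le M$. First I would record the change-of-reference rule for densities: since $\frac{\d\gamma}{\d\mu}=\psi$ (so in particular $\gamma\ll\mu$ and $h^{\mu}_{\alpha}(X\vert Y)$ is defined whenever $h^{\gamma}_{\alpha}(X\vert Y)$ is), the pair $(X,Y)$ has density $F(x,y)\psi(x)$ with respect to $\mu\otimes\eta$, keeping the same reference $\eta$ on $T$. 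Plugging this into the Fubini form of the conditional R\'enyi probability from Section~\ref{sec:defn} and using $\psi(x)^{\alpha}\,\d\mu(x)=\psi(x)^{\alpha-1}\,\d\gamma(x)$ gives
\[
\ren^{\mu}_{\alpha}(X\vert Y)=\left(\int_T\left(\int_S F(x,y)^{\alpha}\psi(x)^{\alpha-1}\,\d\gamma(x)\right)^{\frac{1}{\alpha}}\d\eta(y)\right)^{\frac{\alpha}{\alpha-1}},
\]
so that the only difference from $\ren^{\gamma}_{\alpha}(X\vert Y)$ is the factor $\psi(x)^{\alpha-1}$.

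The key step is then a sign-tracking argument. For $\alpha>1$ one has $\psi^{\alpha-1}\le M^{\alpha-1}$; pulling this constant out of the inner integral, then through the (positive) power $\tfrac1\alpha$, then through the $\d\eta$ integral, and finally through the (positive) outer exponent $\tfrac{\alpha}{\alpha-1}$ yields
\[
\ren^{\mu}_{\alpha}(X\vert Y)\le\bigl(M^{\alpha-1}\bigr)^{\frac{\alpha}{\alpha-1}}\ren^{\gamma}_{\alpha}(X\vert Y)=M\,\ren^{\gamma}_{\alpha}(X\vert Y).
\]
For $0<\alpha<1$ the exponent $\alpha-1$ is negative, so $\psi\le M$ now gives $\psi^{\alpha-1}\ge M^{\alpha-1}$: the inner bound reverses, survives the (positive) power $\tfrac1\alpha$ and the $\d\eta$ integral with the same orientation, and is reversed once more by the (negative) outer exponent $\tfrac{\alpha}{\alpha-1}$, landing again at $\ren^{\mu}_{\alpha}(X\vert Y)\le M\,\ren^{\gamma}_{\alpha}(X\vert Y)$.

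In both ranges we obtain $\ren^{\mu}_{\alpha}(X\vert Y)\le M\,\ren^{\gamma}_{\alpha}(X\vert Y)$; taking $-\log$ of both sides gives $h^{\mu}_{\alpha}(X\vert Y)\ge h^{\gamma}_{\alpha}(X\vert Y)-\log M$, and the cases $\alpha\in\{0,1,\infty\}$ follow by taking limits via the characterizations in Propositions~\ref{Proposition:lim1} and~\ref{Proposition:lim0} (and the $\alpha=1$ chain of divergence bounds). The main obstacle is really just the bookkeeping of which inequalities flip in the regime $0<\alpha<1$, where the constant bound, the power $\tfrac1\alpha$, and the conjugate exponent $\tfrac{\alpha}{\alpha-1}$ interact; this is the same double-sign-flip pattern already used in the proofs of Proposition~\ref{Proposition: conditional renyi is monotone in order} and Proposition~\ref{Proposition:chain}, so nothing genuinely new is needed.
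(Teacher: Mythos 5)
Your proof is correct and takes essentially the same route as the paper's: both rewrite $\ren^{\mu}_{\alpha}(X\vert Y)$ so that the only change from $\ren^{\gamma}_{\alpha}(X\vert Y)$ is the weight $\psi(x)^{\alpha-1}$ in the inner integral (the paper changes the reference on both coordinates, with the change on $T$ cancelling by the $\eta$-invariance you invoke), then bound this weight by $M^{\alpha-1}$ and track the sign flips through the exponents. The only difference worth noting is that for $\alpha\in(0,1)$ the paper adds the hypothesis that $\mu\ll\gamma$, whereas your step $\psi^{\alpha-1}\ge M^{\alpha-1}$ is legitimate $\gamma$-a.e.\ because $\gamma(\{\psi=0\})=\int_{\{\psi=0\}}\psi\,\d\mu=0$, so your unconditional treatment of that range (and the limiting cases) is fine.
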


\begin{proof}
 Then the joint density of $(X,Y)$ under the measure $\mu \otimes \mu$ becomes $F(x,y)\psi(x) \psi(y)$ if the joint density was $F(x,y)$ under $\gamma$. Now suppose $\alpha \geq 1$ then, 
\[
\begin{split}
&h^{\mu}_{\alpha}(X \vert Y) \\
&= \frac{- \alpha}{\alpha - 1} \log \left[  \int_T \left( \int_S F(x,y)^{\alpha} \psi(x)^{\alpha} \psi(y)^{\alpha} \d \mu(x) \right)^{\frac{1}{\alpha}}  \d \mu (y)  \right] \\
&= \frac{- \alpha}{\alpha - 1} \log \left[  \int_T \left( \int_S F(x,y)^{\alpha} \psi(x)^{\alpha}  \d \mu(x) \right)^{\frac{1}{\alpha}} \psi(y) \d \mu (y)  \right] \\
&=  \frac{- \alpha}{\alpha - 1} \log \left[  \int_T \left( \int_S F(x,y)^{\alpha} \psi(x)^{\alpha - 1}  \d \gamma (x) \right)^{\frac{1}{\alpha}}  \d \gamma (y)  \right] \\
& \geq \frac{- \alpha}{\alpha - 1} \left( \log \left[  \int_T \left( \int_S F(x,y)^{\alpha}  \d \gamma (x) \right)^{\frac{1}{\alpha}}  \d \gamma (y)  \right] + \log M^{\frac{\alpha - 1 }{\alpha}} \right) \\
&= h^{\gamma}_{\alpha}(X \vert Y) - \log M.
\end{split}
\]
If $\alpha \in (0,1)$, the same inequality holds if $\mu$ is also absolutely continuous w.r.t. $\gamma$. 
\end{proof}

\section{Notions of $\alpha$-mutual information}
\label{sec:mi}
Arimoto \cite{Ari77} used his conditional R\'enyi entropy to define a mutual information that we extend to the
general setting as follows.
\begin{Definition}
Let $X$ be an $S$-valued random variable and let $Y$ be a $T$-valued random variable, with a given joint distribution. Then, we define
\[
I^{(\gamma)}_{\alpha} (X\leadsto Y) = h^{\gamma}_{\alpha}(X) - h^{\gamma}_{\alpha}(X \vert Y).
\]
\end{Definition}
We use the squiggly arrow to emphasize the lack of symmetry in $X$ and $Y$, but nonetheless to distinguish from the notation for directed mutual
information, which is usually written with a straight arrow. By Corollary~\ref{cor:condredentr}, for $\alpha \in (0,\infty)$,  $I^{(\gamma)}_{\alpha} (X\leadsto Y) = 0$ if and only if
$X$ and $Y$ are independent. Therefore $I^{(\gamma)}_{\alpha} (X\leadsto Y)$, for any choice of reference measure $\gamma$, can be seen as a measure of dependence between $X$ and $Y$. 

Let us discuss a little further the validity of $I^{(\gamma)}_{\alpha} (X\leadsto Y)$ as a dependence measure.
If the conditional distributions are denoted by $\prob_{X \vert Y=y}$ as in Remark \ref{rem: defrem}, 
using the fact that $h^{\nu}_{\alpha}(Z) = - D_{\alpha}(Z \Vert \nu)$ for any random variable $Z$, we have for any  $\alpha \in (0,1) \cup (1, \infty)$ that
\[
\begin{split}
I^{(\gamma)}_{\alpha} (X\leadsto Y)
&=h^{\gamma}_{\alpha}(X) - h^{\gamma}_{\alpha}(X \vert Y) \\
&= \frac{\alpha}{1- \alpha} \log e^{- \frac{1-\alpha}{\alpha} D_{\alpha}(\prob_{X} \Vert \gamma)}  
- \frac{\alpha}{1- \alpha} \log \int_T e^{- \frac{1-\alpha}{\alpha} D_{\alpha}(\prob_{X \vert Y=y} \Vert \gamma)} \d \prob_{Y}(y) \\
&= - \frac{\alpha}{1- \alpha} \log \int_T e^{- \frac{1-\alpha}{\alpha} [ D_{\alpha}(\prob_{X \vert Y=y} \Vert \gamma) - D_{\alpha}(\prob_{X} \Vert \gamma)]} \d \prob_{Y}(y) .
\end{split}
\]
Furthermore, when $\alpha \in (0,1)$, by \cite[Proposition 2]{EH14}, we may also write 
\[
\begin{split}
I^{(\gamma)}_{\alpha} (X\leadsto Y) &=
 - \frac{\alpha}{1- \alpha} \log \int_T e^{-  [ D_{1 - \alpha}(\gamma \Vert \prob_{ X \vert Y=y} ) - D_{1 - \alpha}(\gamma \Vert \prob_{X}) ]} \d \prob_{Y}(y).
\end{split}
\] 
Note that R\'enyi divergence is convex in the second argument (see \cite[Theorem 12]{EH14}) when $\alpha \in (0,1)$, and the last equation suggests that 
Arimoto's mutual information can be seen as a quantification of this convexity gap.

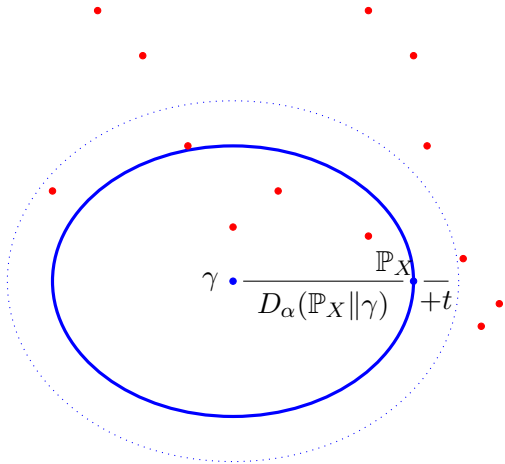
\begin{figure}
\centering
    \begin{tikzpicture}[help lines/.style={blue!30,very thin},scale=0.6]
    [
    every edge quotes/.append style={sloped, font=\sffamily\scriptsize, auto}
  ]
    \draw[color=blue,very thick] (0, 0) ellipse (4cm and 3cm);
    \draw[color=blue,dotted] (0, 0) ellipse (5cm and 4cm);


    \foreach \x/\y in { -0/0 , 4/0}
    \filldraw[blue] (\x, \y) circle(2pt);
 \draw[color=black] (0,0) node (A) {};
 \draw[color=blue] (4,0) node (B){};
 \draw[color=white] (5,0) node (C){};
    \node at (-0.5,0) {$\gamma$};
    \node at (3.6,0.4) {$\prob_{X}$};
    \draw (A) edge ["$D_{\alpha}(\prob_{X} \Vert \gamma)$"'] (B);
    \draw (B) edge ["$+t$"'] (C);
      \foreach \x/\y in { 1/2, 4.3/3, 4/5, 3/6 , -1/3, 0/1. 2/-4 , 3/1 , 5.1/0.5, 5.9/-0.5, 5.5/-1, -2/5, -3/6, -4/2}
    \filldraw[red] (\x, \y) circle(2pt);
    \end{tikzpicture}
    \caption{A schematic diagram showing how large $I^{(\gamma)}_{\alpha} (X\leadsto Y)$, for a fixed $\alpha < 1$, demonstrates strong dependence between $X$ and $Y$:
    the space depicted is the space of probability measures on $S$, including $\gamma$, $\prob_{X}$, and the red dots 
    representing the conditional distributions of $X$ given that $Y$ takes different values in $T$. The $D_\alpha$-balls around $\gamma$
    are represented by ellipses to emphasize that the geometry is non-Euclidean and in fact, non-metric.
    When $I^{(\gamma)}_{\alpha} (X\leadsto Y)$ is large, there is a significant probability that $Y$ takes values such that the
    corresponding conditional distributions of $X$ lie outside the larger $D_\alpha$-ball, and therefore far from the
    (unconditional) distribution $\prob_{X}$ of $X$.} \label{fig: one}
    \end{figure}

One can also see clearly from the above expressions why this quantity controls, at least for $\alpha\in (0,1)$, the dependence between $X$ and $Y$: indeed,
one has for any $\alpha \in (0,1)$ and any  $t>0$ that,
\[
\begin{split}
\prob_{Y} \{ D_{\alpha}(\prob_{X \vert Y} \Vert \gamma) - D_{\alpha}(\prob_{X} \Vert \gamma) < t \} 
&=\prob_{Y} \{ e^{-\beta [ D_{\alpha}(\prob_{X \vert Y} \Vert \gamma) - D_{\alpha}(\prob_{X} \Vert \gamma)]} > e^{-\beta t} \} 
\leq   e^{\beta t} e^{-\beta I^{(\gamma)}_{\alpha} (X\leadsto Y)} ,
\end{split}
\]
where the inequality comes from Markov's inequality, and we use $\beta=\frac{1-\alpha}{\alpha}$. Thus, when $I^{(\gamma)}_{\alpha} (X\leadsto Y)$ is large,
the probability that the conditional distributions of $X$ given $Y$ cluster at around the same ``R\'enyi divergence'' distance from the reference measure $\gamma$ 
as the unconditional distribution of $X$ (which is of course a mixture of the  conditional distributions) is small, suggesting a significant ``spread'' of the conditional 
distributions and therefore strong dependence. This is illustrated in Figure~\ref{fig: one}. Thus, 
despite the dependence of $I^{(\gamma)}_{\alpha} (X\leadsto Y)$ on the reference measure $\gamma$,
it does guarantee strong dependence when it is large (at least for $\alpha<1$). 
When $\alpha \to 1^{-}$ we have $\beta \to 0$, and consequently the upper bound $e^{\beta t} e^{-\beta I^{(\gamma)}_{\alpha} (X\leadsto Y)} \to 1$ making the inequality trivial.

The ``mutual information'' quantity $I^{(\gamma)}_{\alpha} (X\leadsto Y)$ clearly depends on the choice of the reference measure $\gamma$.
Nonetheless, there are 3 families of R\'enyi mutual informations that are independent of the choice of reference measure,
which we now introduce.

\begin{Definition} Fix $ \alpha \geq 0$.
\begin{enumerate}
\item[]
\item The Lapidoth-Pfister $\alpha$-mutual information is defined as 
\[
J_{\alpha}(X;Y) := \min_{\mu \in \mathcal{P}(S), \nu \in \mathcal{P}(T)} D_{\alpha} (\prob_{X,Y} \Vert \mu \otimes \nu).
\]
\item The Augustin-Csisz\'ar $\alpha$-mutual information is defined as
\[
K_{\alpha}^{X\leadsto Y}(\prob_{X,Y}) = \min_{\mu \in \mathcal{P}(T)} \E_{X} D_{\alpha}(\prob_{Y \vert X}(\cdot \vert X) \Vert \mu).
\]
\item Sibson's $\alpha$-mutual information is defined as
\[
I_{\alpha}^{X\leadsto Y}(\prob_{X,Y}) = \min_{\mu \in \mathcal{P}(T)} D_{\alpha} (\prob_{X,Y} \Vert \prob_{X} \otimes \mu).
\]
\end{enumerate}
\end{Definition}

The quantity $J_{\alpha}$ was recently introduced by Lapidoth and Pfister as a measure of independence in \cite{LP16:icsee} (cf., \cite{LP18:itw, TH18:1, LP19}). The Augustin-Csisz\'ar mutual information was originally introduced in \cite{Aug78:hab} by Udo Augustin with a slightly different parametrization, and gained much popularity following Csisz\'ar's work in \cite{Csi95}. For a discussion on early work on this quantity and applications also see \cite{Nak19:2} and references therein. Both \cite{Aug78:hab} and \cite{Nak19:2} treat abstract alphabets however the former is limited to $\alpha \in (0,1)$ while the latter treats all $\alpha \in (0, \infty)$. Sibson's definition originates in \cite{Sib69} where he introduces $I_{\alpha}$ in the form of \textit{information radius} (see, e.g,  \cite{SV18:1}), which is often written in terms of Gallager's function (from \cite{Gal65}). 
Since all the quantities in the above definition are stated in terms of R\'enyi divergences not involving the reference measure $\gamma$, 
they themselves are independent of the reference measure. Their relationship with the R\'enyi divergence also shows that all of them are non-negative. Moreover, putting $\mu= \prob_{X}, \nu = \prob_{Y}$ in the expression for $J_{\alpha}$ and $\mu = \prob_{Y}$ in expressions for $K_{\alpha}$ and $I_{\alpha}$ when $X,Y$ are independent shows that they all vanish under independence.

While these notions of mutual information are certainly not equal to $I^{(\gamma)}_{\alpha}$ in general when $\alpha \neq 1$, they do have a 
direct relationship with conditional R\'enyi entropies, by varying the reference measure. 

Since 
$\min_{\mu} \min_{\nu} D_{\alpha}(\prob_{X,Y} \Vert \mu \otimes \nu) = \min_{\mu} - h^{\mu}_{\alpha}(X \vert Y) = - \max_{\mu} h^{\mu}_{\alpha}(X \vert Y)$, 
where all optimizations are done over probability measures, we can write Lapidoth and Pfister's mutual information as
\[
J_{\alpha}(X;Y) = - \max_{\mu} h^{\mu}_{\alpha}(X \vert Y) = \min_{\mu} - \frac{\alpha}{1- \alpha} \log \int_T e^{- \frac{1-\alpha}{\alpha} D_{\alpha}(\prob_{X \vert Y=y} \Vert \mu)} \d \prob_{Y}(y).
\]
Note that it is symmetric by definition: $J_{\alpha}(X;Y) =J_{\alpha}(Y;X)$, which is why we do not use squiggly arrows to denote it. 
By writing down R\'enyi divergence as R\'enyi entropy w.r.t. reference measure, Augustin-Csisz\'ar's $K_{\alpha}$ can be recast in a similar form, this time using the average R\'enyi entropy of the conditionals instead of Arimoto's conditional R\'enyi entropy,
\[
K_{\alpha}^{X\leadsto Y}= - \max_{\mu} \E_{X} h^{\mu}_{\alpha} (Y \vert X=x) = - \max_{\mu} \tilde{h}^{\mu}_{\alpha}(Y \vert X).
\] 

In light of Theorem~\ref{Theorem:conditionalanddivergence}, Sibson's mutual information can clearly be written in terms of conditional R\'enyi entropy as
\[
I_{\alpha}^{X\leadsto Y}= - h^{\prob_{X}}_{\alpha} (X \vert Y).
\]
This leads to the observation that Sibson's mutual information can be seen as a special case of Arimoto's mutual information, when the reference measure is taken to be the distribution of $X$:
\[
\begin{split}
&I_{\alpha}^{X\leadsto Y}= - h^{\prob_{X}}_{\alpha} (X \vert Y)\\
&= h^{\prob_{X}}_{\alpha}(X) - h^{\prob_{X}}_{\alpha}(X \vert Y)= I^{(\prob_{X})}_{\alpha }(X \leadsto Y).
\end{split}
\]
For the sake of comparision with the corresponding expression for $I^{(\gamma)}_{\alpha} (X\leadsto Y)$, we also write $I_{\alpha}^{X\leadsto Y}$ as
\[
I_{\alpha}^{X\leadsto Y} = - \frac{\alpha}{1- \alpha} \log \int_T e^{- \frac{1-\alpha}{\alpha}  D_{\alpha}(\prob_{X \vert Y=y} \Vert \prob_X)} \d \prob_{Y}(y).
\]

The following inequality, which relates the three families when $\alpha \geq 1$, turns out to be quite fruitful.

\begin{Theorem}\label{Theorem:sandwich}
For $\alpha \geq 1$, we have
\[
K_{\alpha}^{X\leadsto Y} \leq J_{\alpha} (X ; Y) \leq I_{\alpha}^{X\leadsto Y}.
\]
\end{Theorem}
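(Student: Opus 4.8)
The plan is to prove the two inequalities separately. The right-hand inequality $J_\alpha(X;Y)\le I_\alpha^{X\leadsto Y}$ holds for all $\alpha\ge 0$ and requires essentially no work: $I_\alpha^{X\leadsto Y}$ is nothing but $J_\alpha$ with the first marginal in the minimization frozen at $\prob_X$ instead of ranging over all of $\mathcal{P}(S)$, so
\[
J_\alpha(X;Y)=\min_{\mu\in\mathcal{P}(S),\,\nu\in\mathcal{P}(T)}D_\alpha(\prob_{X,Y}\Vert\mu\otimes\nu)\le\min_{\nu\in\mathcal{P}(T)}D_\alpha(\prob_{X,Y}\Vert\prob_X\otimes\nu)=I_\alpha^{X\leadsto Y}.
\]

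For the left-hand inequality $K_\alpha^{X\leadsto Y}\le J_\alpha(X;Y)$ I would reuse the reformulations already assembled in this section. Since $J_\alpha$ is symmetric, the formula $J_\alpha(X;Y)=-\max_\mu h^\mu_\alpha(X\vert Y)$ (a consequence of Theorem~\ref{Theorem:conditionalanddivergence}), applied with $X$ and $Y$ interchanged, gives
\[
J_\alpha(X;Y)=J_\alpha(Y;X)=-\max_{\nu\in\mathcal{P}(T)}h^{\nu}_\alpha(Y\vert X),
\]
while the recasting of the Augustin--Csisz\'ar quantity recorded above reads
\[
K_\alpha^{X\leadsto Y}=-\max_{\nu\in\mathcal{P}(T)}\tilde{h}^{\nu}_\alpha(Y\vert X).
\]
For $\alpha\ge 1$, Jensen's inequality (as noted in Section~\ref{sec:defn}) gives $h^{\nu}_\alpha(Y\vert X)\le\tilde{h}^{\nu}_\alpha(Y\vert X)$ for \emph{every} reference measure $\nu$; taking the supremum over $\nu\in\mathcal{P}(T)$ on both sides preserves the inequality, and negating yields $K_\alpha^{X\leadsto Y}\le J_\alpha(X;Y)$.

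There is no genuine analytic obstacle; the content lies entirely in invoking the correct identities, and two points deserve care. First, the role reversal in the identity $J_\alpha(X;Y)=-\max_\nu h^{\nu}_\alpha(Y\vert X)$: one must observe that $D_\alpha(\prob_{X,Y}\Vert\mu\otimes\nu)=D_\alpha(\prob_{Y,X}\Vert\nu\otimes\mu)$, so that Theorem~\ref{Theorem:conditionalanddivergence} applies verbatim after relabeling the two Polish spaces and their reference measures. Second, the hypothesis $\alpha\ge 1$ is used exactly once, in the direction of the Jensen comparison between $h_\alpha$ and $\tilde{h}_\alpha$; for $\alpha<1$ that comparison reverses, so the sandwich is genuinely an $\alpha\ge 1$ phenomenon. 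If a self-contained argument is preferred, $K_\alpha^{X\leadsto Y}\le J_\alpha(X;Y)$ can also be obtained directly: fix $\mu\in\mathcal{P}(S)$ with $\gamma$-density $m$ and $\nu\in\mathcal{P}(T)$, integrate out $Y$ in $D_\alpha(\prob_{X,Y}\Vert\mu\otimes\nu)$ via Fubini to express it as $\frac{1}{\alpha-1}\log\E_X\big[(f(X)/m(X))^{\alpha-1}e^{(\alpha-1)D_\alpha(\prob_{Y\vert X}\Vert\nu)}\big]$, apply Jensen to the exponential (legitimate for $\alpha\ge 1$, where one divides by $\alpha-1>0$) and discard the nonnegative term $(\alpha-1)D(\prob_X\Vert\mu)$ to get $D_\alpha(\prob_{X,Y}\Vert\mu\otimes\nu)\ge\E_X D_\alpha(\prob_{Y\vert X}\Vert\nu)\ge K_\alpha^{X\leadsto Y}$, and finally minimize over $\mu$ and $\nu$.
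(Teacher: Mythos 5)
Your proof is correct and follows essentially the same route as the paper: the bound $J_\alpha(X;Y)\le I_\alpha^{X\leadsto Y}$ by restricting the minimization to $\mu=\prob_X$, and $K_\alpha^{X\leadsto Y}\le J_\alpha(X;Y)$ via the symmetry of $J_\alpha$, the identities $J_\alpha=-\max_\nu h^{\nu}_\alpha(Y\vert X)$ and $K_\alpha^{X\leadsto Y}=-\max_\nu\tilde{h}^{\nu}_\alpha(Y\vert X)$, and the Jensen comparison $h^{\nu}_\alpha(Y\vert X)\le\tilde{h}^{\nu}_\alpha(Y\vert X)$ for $\alpha\ge 1$. The additional self-contained Fubini--Jensen argument at the end is a valid alternative but not needed.
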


\begin{proof}
Suppose $\alpha \geq 1$. Then, $h^{\mu}_{\alpha}(Y \vert X) \leq \tilde{h}^{\mu}_{\alpha}(Y \vert X)$
so that 
\[ 
\begin{split}
K_{\alpha}^{X\leadsto Y} = - \max_{\mu}\tilde{h}^{\mu}_{\alpha} (Y \vert X) &\leq - \max_{\mu} h^{\mu}_{\alpha} (Y \vert X) \\ &= J_{\alpha}(Y;X) = J_{\alpha}(X;Y).
\end{split}
\]
Moreover,
\[
J_{\alpha}(X;Y) = - \max_{\mu} h^{\mu}_{\alpha}(X \vert Y) \leq - h^{\prob_{X}}_{\alpha} (X \vert Y) = I_{\alpha}^{X\leadsto Y},
\]
which completes the proof.
\end{proof}
\begin{Remark}
When $\alpha \in (0,1)$, from the straightforward observation $J_{\alpha}(X ; Y) \leq I^{X \leadsto Y}_{\alpha}$ and \cite{Csi95}, we have $ J_{\alpha} (X ; Y) \leq I_{\alpha}^{X\leadsto Y} \leq K_{\alpha}^{X\leadsto Y}$.  
\end{Remark}
We note that the relation between $K_{\alpha}^{X\leadsto Y}$ and $I_{\alpha}^{X\leadsto Y}$ (in the finite alphabet case)
goes back to Csisz\'ar \cite{Csi95}.
In the next and final section, we explore the implications of Theorem~\ref{Theorem:sandwich} for various notions of capacity.

\section{Channels and capacities}
\label{sec:cc}
We begin by defining channel and capacities. Throughout this section, assume $\alpha \geq 1$.

\begin{Definition}
Let $(A,\mathcal{A}), (B,\mathcal{B})$ be measurable spaces. A function $W : A \times \mathcal{B} \to \R$ is called a probability kernel or a channel from the input space $(A, \mathcal{A})$ to the output space $(B,\mathcal{B})$ if 
\begin{enumerate}
\item For all $a \in A$, the function $W(\cdot \vert a): \mathcal{B} \to \R$ is a probability measure on $(B, \mathcal{B})$, and 
\item For every $V \in \mathcal{B}$, the function $W(V \vert \cdot): A \to \R$ is a $\mathcal{A}$-measurable function. 
\end{enumerate} 
\end{Definition}

In our setting, the conditional distributions $X$ given $Y=y$ define a channel $W$ from $\support(\prob_{Y})$ to $S$. In terms of this, 
one can write a density-free expression for the conditional R\'enyi entropy:
\[
h^{\gamma}_{\alpha}(X \vert Y) = \frac{\alpha}{1- \alpha} \log \int_T e^{- \frac{1-\alpha}{\alpha} D_{\alpha}(W(\cdot \vert y) \Vert \gamma)} \d \prob_{Y}(y).
\]

\begin{Definition}
Let $(B, \mathcal{B})$ be a measurable space and $\mathcal{W} \subseteq \mathcal{P}(B)$ a set of probability measures on $B$. 
Following \cite{Nak19:1}, define the order-$\alpha$ R\'enyi radius of $\mathcal{W}$ relative to $q \in \mathcal{P}(B)$ by 
\[
S_{\alpha, \mathcal{W}}(q) = \sup_{w \in \mathcal{W}} D_{\alpha}(w \Vert q).
\]
The order-$\alpha$ R\'enyi radius of $\mathcal{W}$ is defined as 
\[
S_{\alpha}(\mathcal{W}) = \inf_{q \in \mathcal{P}(B)} S_{\alpha, \mathcal{W}}(q).
\]
\end{Definition}

Given a joint distribution $\prob_{X,Y}$ of $(X,Y)$, one can consider the quantities 
$I_{\alpha}^{X\leadsto Y}, 
I_{\alpha}^{Y\leadsto X}, 
K_{\alpha}^{X\leadsto Y}, 
K_{\alpha}^{Y\leadsto X}$
and $J_{\alpha}(X;Y)$,  
as functions of an ``input'' distribution $\prob_X$ and the channel from  $X$ to $Y$ 
(i.e.,  the probability kernel formed by the conditional distributions of $Y$ given $X=x$). For example, one can consider 
$I_{\alpha}^{Y\leadsto X} = \inf_{\nu \in \mathcal{P}(S)} D_{\alpha} ( \prob_{X,Y} \Vert \nu \otimes \prob_{Y} )$ 
as a function of $\prob_{X}$ and the probability kernel formed by the conditional distributions of $Y$ given $X=x$. 
Under this interpretation, we first define various capacities. 

\begin{Definition} Given a channel $W$ from $X$ to $Y$, we define capacities of order $\alpha$
by
\begin{enumerate}
\item $\mathcal{C}^{X\leadsto Y}_{K, \alpha}(W) = \sup_{P \in \mathcal{P}(S)} K_{\alpha}^{X\leadsto Y}(P, W)$,
\item $\mathcal{C}_{J, \alpha}(W) = \sup_{P \in \mathcal{P}(S)} J_{\alpha}(P, W)$, and 
\item $\mathcal{C}^{X\leadsto Y}_{I,\alpha}(W) = \sup_{P \in \mathcal{P}(S)} I_{\alpha}^{X\leadsto Y}(P, W)$.
\end{enumerate}
\end{Definition} 

Theorem \ref{Theorem:sandwich} allows us to extend \cite[Theorem 5]{Ver15:ita} to include the capacity based
on the Lapidoth-Pfister mutual information.


\begin{Theorem}\footnote{This theorem corrects Theorem V.1 in \cite{AM19:isit}.} \label{Theorem:cap}
Let $\alpha \geq 1$, and fix a channel $W$ from $X$ to $Y$. Then,
\[\begin{split}
\mathcal{C}^{Y\leadsto X}_{K,\alpha}(W) \leq \mathcal{C}^{X\leadsto Y}_{I,\alpha}(W)&= \mathcal{C}_{J,\alpha}(W) =\mathcal{C}^{X\leadsto Y}_{K,\alpha}(W)
 =S_{\alpha} (W) \leq \mathcal{C}^{Y\leadsto X}_{I,\alpha}(W).
\end{split}\]
\end{Theorem}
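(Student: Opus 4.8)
The plan is to deduce Theorem~\ref{Theorem:cap} from Theorem~\ref{Theorem:sandwich} together with the already-known identification of the Sibson and Augustin--Csisz\'ar capacities with the order-$\alpha$ R\'enyi radius, using the symmetry $J_\alpha(X;Y)=J_\alpha(Y;X)$ to obtain the two outer inequalities for free. First I would record the pointwise consequence of Theorem~\ref{Theorem:sandwich}: for a fixed channel $W$ and each input distribution $P\in\mathcal{P}(S)$, applying that theorem to the joint law $P\otimes W$ gives $K_\alpha^{X\leadsto Y}(P,W)\le J_\alpha(P,W)\le I_\alpha^{X\leadsto Y}(P,W)$, and taking $\sup_P$ (which preserves inequalities) yields $\mathcal{C}^{X\leadsto Y}_{K,\alpha}(W)\le \mathcal{C}_{J,\alpha}(W)\le \mathcal{C}^{X\leadsto Y}_{I,\alpha}(W)$.

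Next I would invoke the known fact---Csisz\'ar \cite{Csi95} in the finite-alphabet case, Nakibo\u{g}lu \cite{Nak19:1,Nak19:2} for abstract alphabets, and \cite[Theorem~5]{Ver15:ita}---that $\mathcal{C}^{X\leadsto Y}_{I,\alpha}(W)=S_\alpha(W)=\mathcal{C}^{X\leadsto Y}_{K,\alpha}(W)$, where $S_\alpha(W)$ is the R\'enyi radius of $\{W(\cdot\vert x):x\in S\}$. Since $\mathcal{C}_{J,\alpha}(W)$ is squeezed between these two equal quantities, all four of $\mathcal{C}^{X\leadsto Y}_{I,\alpha}(W)$, $\mathcal{C}_{J,\alpha}(W)$, $\mathcal{C}^{X\leadsto Y}_{K,\alpha}(W)$ and $S_\alpha(W)$ coincide, which is exactly the block of three middle equalities in the statement.

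For the two outer inequalities I would apply Theorem~\ref{Theorem:sandwich} once more to the same joint law $P\otimes W$, but with the two coordinates relabelled so that the $T$-valued variable plays the role of the ``first'' one; this gives $K_\alpha^{Y\leadsto X}(P,W)\le J_\alpha(Y;X)\le I_\alpha^{Y\leadsto X}(P,W)$, and since $J_\alpha(Y;X)=J_\alpha(X;Y)=J_\alpha(P,W)$, taking $\sup_P$ gives $\mathcal{C}^{Y\leadsto X}_{K,\alpha}(W)\le \mathcal{C}_{J,\alpha}(W)\le \mathcal{C}^{Y\leadsto X}_{I,\alpha}(W)$. Combining this with the middle chain $\mathcal{C}_{J,\alpha}(W)=\mathcal{C}^{X\leadsto Y}_{I,\alpha}(W)=S_\alpha(W)$ established above closes the displayed chain of (in)equalities.

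The only genuinely substantial ingredient, imported rather than reproved here, is the abstract-alphabet minimax identity $\mathcal{C}^{X\leadsto Y}_{I,\alpha}(W)=S_\alpha(W)=\mathcal{C}^{X\leadsto Y}_{K,\alpha}(W)$: its ``easy half'' $\mathcal{C}^{X\leadsto Y}_{K,\alpha}(W)\le \mathcal{C}^{X\leadsto Y}_{I,\alpha}(W)$ is already covered by the first step, but the reverse inequality does not follow from any pointwise relation and is precisely the content of the Csisz\'ar--Nakibo\u{g}lu theory, so I would take care to cite it in the form valid for Polish spaces. The remaining points requiring a moment's attention are purely formal: the reversed-direction use of Theorem~\ref{Theorem:sandwich} is legitimate because that theorem holds for an arbitrary jointly distributed pair (so relabelling the coordinates is harmless), and passing from the pointwise sandwich to the capacities is just monotonicity of $\sup$.
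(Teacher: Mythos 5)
Your proposal is correct and follows essentially the same route as the paper: apply Theorem~\ref{Theorem:sandwich} pointwise and take suprema, import the Csisz\'ar--Nakibo\u{g}lu identities $\mathcal{C}^{X\leadsto Y}_{I,\alpha}(W)=\mathcal{C}^{X\leadsto Y}_{K,\alpha}(W)=S_{\alpha}(W)$ to squeeze $\mathcal{C}_{J,\alpha}(W)$, and use the symmetry of $J_{\alpha}$ with the relabelled sandwich for the two outer inequalities. No substantive differences from the paper's argument.
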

\begin{proof}
Theorem \ref{Theorem:sandwich} implies that,
\[
\sup_{P \in \mathcal{P(S)}} K_{\alpha}^{X\leadsto Y} (P,W) \leq \sup_{P \in \mathcal{P(S)}} J_{\alpha}  (P,W) \leq \sup_{P \in \mathcal{P(S)}} I_{\alpha}^{X\leadsto Y} (P,W),
\]
that is,
\[
\mathcal{C}^{X\leadsto Y}_{K,\alpha}(W) \leq \mathcal{C}_{J,\alpha}(W) \leq \mathcal{C}^{X\leadsto Y}_{I,\alpha}(W).
\]
It was shown by Csisz\'ar \cite{Csi95} in the finite alphabet setting (in fact, he showed this for all $\alpha>0$) 
that $\mathcal{C}^{X\leadsto Y}_{I,\alpha}(W)=\mathcal{C}^{X\leadsto Y}_{K,\alpha}(W)$. 
Nakibo\u{g}lu demonstrates $\mathcal{C}^{X\leadsto Y}_{I,\alpha}(W)=S_{\alpha} (W)$ in 
\cite{Nak19:1} and $\mathcal{C}^{X\leadsto Y}_{K,\alpha}(W)=S_{\alpha} (W)$ in \cite{Nak19:2} for abstract alphabets. 
Putting all this together, we have
\[
\mathcal{C}^{X\leadsto Y}_{I,\alpha}(W) = \mathcal{C}_{J,\alpha}(W) =\mathcal{C}^{X\leadsto Y}_{K,\alpha}(W) =S_{\alpha} (W).
\]
Finally, using the  symmetry of $J_{\alpha}$ and Theorem \ref{Theorem:sandwich} in a similar fashion again, we get
\[
\mathcal{C}^{Y\leadsto X}_{K,\alpha}(W) \leq  \mathcal{C}_{J,\alpha}(W) \leq \mathcal{C}^{Y\leadsto X}_{I,\alpha}(W).
\]
This completes the proof.
\end{proof}
The two inequalities in the last theorem cannot be improved to be equalities, this follows from a counter-example communicated to the authors by C. Pfister. 

Since $J_{\alpha}(X ; Y)$ is no longer sandwiched between $K^{X \leadsto Y}_{\alpha}$ and $I^{X \leadsto Y}_{\alpha}$ when $\alpha \in (0,1)$ 
the same argument cannot be used to deduce the equality of various capacities in this case. However, when $\alpha \in [1/2,1)$, 
a direct demonstration proves that the Lapidoth-Pfister capacity of a channel equals R\'enyi radius when the state spaces are finite. 

\begin{Theorem}\label{thm: lpcapisrad}
Let $\alpha \in [\frac{1}{2},1)$, and fix a channel $W$ from $X$ to $Y$ where $X$ and $Y$ take values in finite sets $S$ and $T$ respectively. Then, 
\[
\sup_{P \in \mathcal{P}(S)} J_{\alpha}(P,W) = S_{\alpha}(\mathcal{W}).
\]
\end{Theorem}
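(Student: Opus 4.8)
The plan is to recast $\sup_{P}J_\alpha(P,W)$ as a minimax value and evaluate it with Sion's theorem, the hypothesis $\alpha\in[\tfrac12,1)$ entering through a concavity property that genuinely fails for $\alpha<\tfrac12$. Write $\beta=\tfrac{1-\alpha}{\alpha}>0$, put $\mathcal W=\{W(\cdot\vert x):x\in S\}\subseteq\mathcal P(T)$, and for $P\in\mathcal P(S)$ and $\nu\in\mathcal P(T)$ define
\[
\Phi(P,\nu):=\sum_{x\in S}P(x)\,e^{-\beta D_{\alpha}(W(\cdot\vert x)\Vert\nu)}=\sum_{x\in S}P(x)\Bigl(\sum_{y\in T}W(y\vert x)^{\alpha}\nu(y)^{1-\alpha}\Bigr)^{1/\alpha}.
\]
Using $J_\alpha(X;Y)=J_\alpha(Y;X)$ together with the representation $J_\alpha(X;Y)=-\max_{\mu}h^{\mu}_{\alpha}(X\vert Y)$ from Section~\ref{sec:mi} (now with the roles of $X$ and $Y$ exchanged, and $X\sim P$, $\prob_{Y\vert X}=W$), one has $J_\alpha(P,W)=\min_{\nu\in\mathcal P(T)}\bigl[-\tfrac1\beta\log\Phi(P,\nu)\bigr]$; and since $t\mapsto-\tfrac1\beta\log t$ is decreasing, taking $\sup_{P}$ turns this into
\[
\sup_{P\in\mathcal P(S)}J_\alpha(P,W)=-\tfrac1\beta\log\Bigl(\inf_{P\in\mathcal P(S)}\ \sup_{\nu\in\mathcal P(T)}\Phi(P,\nu)\Bigr).
\]
Thus the statement reduces to the identity $\inf_{P}\sup_{\nu}\Phi(P,\nu)=e^{-\beta S_{\alpha}(\mathcal W)}$.

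Next I would verify that Sion's minimax theorem applies to $\Phi$ on $\mathcal P(S)\times\mathcal P(T)$, which are compact and convex since $S,T$ are finite. The map $P\mapsto\Phi(P,\nu)$ is affine, and $\Phi$ is jointly continuous because $\nu(y)^{1-\alpha}$ extends continuously to $\nu(y)=0$ (as $1-\alpha>0$). The only substantive point is that $\nu\mapsto\Phi(P,\nu)$ is \emph{concave}. By affinity in $P$ it suffices that each $\nu\mapsto\bigl(\sum_{y}W(y\vert x)^{\alpha}\nu(y)^{1-\alpha}\bigr)^{1/\alpha}$ be concave on $[0,\infty)^{T}$. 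Writing $g_{x}(\nu):=\bigl(\sum_{y}W(y\vert x)^{\alpha}\nu(y)^{1-\alpha}\bigr)^{1/(1-\alpha)}$, the exponent $p=1-\alpha$ lies in $(0,1)$, so the reverse Minkowski inequality makes $g_{x}$ superadditive on $[0,\infty)^{T}$; being also positively homogeneous of degree one, $g_{x}$ is concave. Since $\beta/(1-\alpha)=1/\alpha$, we have $\bigl(\sum_{y}W(y\vert x)^{\alpha}\nu(y)^{1-\alpha}\bigr)^{1/\alpha}=g_{x}(\nu)^{\beta}$, and $\beta=\tfrac{1-\alpha}{\alpha}\in(0,1]$ exactly when $\alpha\geq\tfrac12$; for such $\alpha$, composing the concave nonnegative $g_{x}$ with the concave nondecreasing map $t\mapsto t^{\beta}$ preserves concavity. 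This is precisely where $\alpha\geq\tfrac12$ is used.

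Granting the concavity, Sion's theorem yields $\inf_{P}\sup_{\nu}\Phi(P,\nu)=\sup_{\nu}\inf_{P}\Phi(P,\nu)$. As $\Phi(\cdot,\nu)$ is affine on a simplex, its infimum over $P$ is attained at a point mass, so $\inf_{P}\Phi(P,\nu)=\min_{x\in S}e^{-\beta D_{\alpha}(W(\cdot\vert x)\Vert\nu)}=e^{-\beta\max_{x\in S}D_{\alpha}(W(\cdot\vert x)\Vert\nu)}=e^{-\beta S_{\alpha,\mathcal W}(\nu)}$, and then $\sup_{\nu}\inf_{P}\Phi(P,\nu)=e^{-\beta\inf_{\nu}S_{\alpha,\mathcal W}(\nu)}=e^{-\beta S_{\alpha}(\mathcal W)}$. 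Plugging back into the displayed formula for $\sup_{P}J_\alpha(P,W)$ gives $\sup_{P}J_\alpha(P,W)=S_{\alpha}(\mathcal W)$, which is the claim.

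I expect the concavity of $\nu\mapsto\Phi(P,\nu)$ to be the one nonroutine ingredient; its reliance on $\alpha\geq\tfrac12$ is sharp for this argument, since already in one variable $g_{x}(\nu)^{\beta}$ behaves like $\nu^{\beta}$, which is convex as soon as $\beta>1$, i.e. $\alpha<\tfrac12$, so the minimax swap breaks down below $\tfrac12$ and a different idea would be needed there. A convenient side effect of the minimax chain above is that the ``easy'' inequality $\sup_{P}J_\alpha(P,W)\leq S_{\alpha}(\mathcal W)$ is automatic (from $\sup_{\nu}\Phi(P,\nu)\geq\inf_{P'}\sup_{\nu}\Phi(P',\nu)$ for every $P$), so the earlier Csisz\'ar-type capacity identities are not needed here; finiteness of $S$ and $T$ is used only for the compactness required by Sion's theorem (and to replace suprema over $x$ by maxima).
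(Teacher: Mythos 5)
Your proof is correct and takes essentially the same route as the paper: both pass to the exponentiated objective $\Phi(P,\nu)=\int_S e^{-\frac{1-\alpha}{\alpha}D_{\alpha}(W(\cdot\vert x)\Vert\nu)}\,dP(x)$ (the paper works with $g=-\Phi$ in its parametrization), swap the optimizations over $P$ and $\nu$ by a convex minimax theorem (Sion for you, von Neumann in the paper), and evaluate the inner optimization over $P$ at point masses to identify the value with the R\'enyi radius. The only difference is that where the paper cites \cite[Lemma 17]{LP19} for the convexity of $-\Phi$ in the output measure, you prove the needed concavity directly (reverse Minkowski superadditivity plus degree-one homogeneity, then composition with $t\mapsto t^{(1-\alpha)/\alpha}$, concave precisely because $\alpha\ge\tfrac12$), which makes the role of the hypothesis $\alpha\in[\tfrac12,1)$ explicit.
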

\begin{proof}
We continue using integral notation instead of summation. Note that, 
\[
J_{\alpha}(X;Y) = - \max_{\mu \in \mathcal{P}(T)} -\beta \log \int_S e^{ \frac{1}{\beta} D_{\alpha}(W(x) \Vert \mu)} \d P (x) = \min_{\mu \in \mathcal{P}(T)} \beta \log \int_S e^{ \frac{1}{\beta} D_{\alpha}(W(x) \Vert \mu)} \d P (x),
\]
where $\beta = \frac{\alpha}{\alpha - 1}.$
We consider the function $f(P, \mu) = \beta \log \int_S e^{ \frac{1}{\beta} D_{\alpha}(W(x) \Vert \mu)} \d P (x)$ defined on $\mathcal{P}(S) \times \mathcal{P}(T).$ Observe that the function $g(P, \mu)= -e^{\frac{1}{\beta}f(P, \mu)} = - \int_S e^{ \frac{1}{\beta} D_{\alpha}(W(x) \Vert \mu)} \d P (x)$ has the same minimax properties as $f$. We make the following observations about this function. 
\begin{itemize}
\item \textit{$g$ is linear in $P$.}
\item \textit{$g$ is convex in $\mu$}.
\\
Follows from the proof in \cite[Lemma 17]{LP19}.
\item \textit{$g$ is continuous in each of the variables $P$ and $\mu$.}
Continuity in $\mu$ follows from continuity of $D_{\alpha}$ in the second coordinate (see, for example, in \cite{EH14}) whereas continuity in $P$ is a consequence of linearity of the integral (summation).
\end{itemize}
The above observations ensure that we can apply von Neumann's convex minimax theorem to $g$, and therefore to $f$ to conclude that
\[
\sup_{P} J_{\alpha}(P,W) = \sup_{P} \min_{\mu} f(P, \mu) = \min_{\mu} \sup_{P} f(P,\mu).
\]
For a fixed $\mu$ however, $\sup_{P} f(P , \mu) = \sup_{P} \beta \log \int e^{\frac{1}{\beta} D_{\alpha}(W(x) \Vert \mu)} \d P(x) = \sup_{x} D_{\alpha} (W(x) \Vert \mu)$ (the RHS is clearly bigger than the LHS, for the other direction use measures $P = \delta_{x_{n}}$ where $x_{n}$ is a supremum achieving sequence for the RHS). This shows that when $\alpha \geq 1/2$ the capacity coming from the $J_{\alpha}(X;Y)$ equals the R\'enyi radius if the state spaces are finite.
\ 
\end{proof}

Though we do not treat capacities coming from Arimoto's mutual information in this paper due to its dependence on a reference measure, a remark can be 
made in this regard following B.~Nakibo\u{g}lu's \cite{Nak20:pers} observation that Arimoto's mutual information w.r.t. $\gamma$ of a joint distribution $(X,Y)$ 
can be written as a Sibson mutual information of some input probability measure $P$ and the channel $W$ from $X$ to $Y$ corresponding to $\prob_{X,Y}$. 
Let $X, Y$ denote the marginals of $(X,Y)$. As before there are reference measures $\gamma$ on the state space $S$ of $X$. Let $P$ denote the probability 
measure on $S$ with density $\frac{\d P}{\d \gamma} = e^{(1- \alpha) D_{\alpha}(\prob_{X} \Vert \gamma)} \left( \frac{\d \prob_{X}}{\d \gamma} \right)^{\alpha}$. Then a calculation shows that 
\[
I^{(\gamma)}_{\alpha} (X\leadsto Y) =  I_{\alpha}^{X\leadsto Y}(P, W).
\]
Therefore, it follows that if a reference measure $\gamma$ is fixed, then the capacity of order $\alpha$ of a channel $W$ calculated from Arimoto's mutual information will be less than the capacity based on Sibson mutual information (which equals the R\'enyi radius of $W$).

\subsection*{Acknowledgements} 
We thank C. Pfister for many useful comments, especially for pointing out an error in \cite{AM19:isit} that is corrected here,
and for a simpler proof of Theorem~\ref{thm: lpcapisrad}. 
We also thank B. Nakibo\u{g}lu for detailed suggestions, which include but are not limited to pointing us towards useful references, 
an easier proof of Theorem~\ref{Theorem:conditionalanddivergence}, and the observation discussed at the very end of the last section. 
 This research was supported in part by NSF grant DMS-1409504.


\end{document}